\documentclass[journal,twoside,web]{ieeecolor}

\usepackage{generic}
\usepackage{cite}
\usepackage{amsmath,amssymb,amsfonts}

\usepackage{tabularray}
\usepackage{graphicx}

\usepackage{hyperref}

\usepackage{bbm}
\usepackage{leftidx}
\usepackage{extarrows}
\usepackage{colortbl}
\usepackage{CJK}
\usepackage{makecell}

\usepackage{stmaryrd}
\usepackage{enumerate}
\usepackage{stfloats}
\usepackage{amsfonts}
\usepackage{amssymb}
\usepackage{amsmath}
\usepackage{mathrsfs}

\usepackage{multirow}
\usepackage{diagbox}
\usepackage{booktabs}
\usepackage{bm}

\usepackage[T1]{fontenc}
\usepackage[utf8]{inputenc}
\usepackage{babel}
\usepackage[font=small,labelfont=bf,tableposition=top]{caption}
\usepackage{booktabs}
\usepackage{threeparttable}

\usepackage{cite}
\usepackage{subfigure}
\usepackage{graphicx}
\usepackage{tabularray}

\usepackage[T1]{fontenc}
\usepackage[utf8]{inputenc}
\usepackage{babel}
\usepackage[font=small,labelfont=bf,tableposition=top]{caption}
\usepackage{booktabs}
\usepackage{threeparttable}  

\usepackage{empheq}
\usepackage{makecell}

\usepackage{bbm}
\usepackage{leftidx}
\usepackage{extarrows}
\usepackage{colortbl}
\usepackage{CJK}
\usepackage{makecell}

\usepackage{multirow}
\usepackage{diagbox}
\usepackage{booktabs}

\usepackage{algorithm}
\usepackage{algorithmicx}
\usepackage{algpseudocode}

\newtheorem{Theorem}{Theorem}
\newtheorem{Lemma}{Lemma}
\newtheorem{ass}{Assumption}
\newtheorem{rmk}{Remark}
\newtheorem{defn}{Definition}
\newtheorem{proposition}{Proposition}


\hypersetup{hidelinks=true}
\usepackage{bm}
\usepackage{textcomp}
\def\BibTeX{{\rm B\kern-.05em{\sc i\kern-.025em b}\kern-.08em
    T\kern-.1667em\lower.7ex\hbox{E}\kern-.125emX}}
\markboth{\hskip25pc IEEE TRANSACTIONS AND JOURNALS TEMPLATE}
{Author \MakeLowercase{\textit{et al.}}: Title}

\begin{document}
\title{A Unified  Attack Detection Strategy for   Multi-Agent Systems over Transient  and  Steady Stages}
\author{Jinming~Gao,
	Yijing~Wang,
	Wentao~Zhang,~\IEEEmembership{Member,~IEEE},
	Rui~Zhao,~\IEEEmembership{Graduate Student Member,~IEEE}, Yang~Shi,~\IEEEmembership{Fellow,~IEEE}, and  
	Zhiqiang~Zuo,~\IEEEmembership{Senior Member,~IEEE}
\thanks{This work was supported by the National Natural Science Foundation of China
	No. 62173243,  No. 61933014.}
\thanks{				J.  Gao, Y.  Wang, R. Zhao and Z. Zuo are with the Tianjin Key Laboratory of Intelligent Unmanned Swarm Technology and System, School of Electrical and Information Engineering, Tianjin
	University, 300072, P. R. China.
	(e-mail: gjinming@tju.edu.cn; yjwang@tju.edu.cn; ruizhao@tju.edu.cn;  zqzuo@tju.edu.cn)}
\thanks{			W. Zhang is with the Continental-NTU Corporate Lab, Nanyang
	Technological University, 639798, Singapore, and is also with the School
	of Electrical and Electronic Engineering, Nanyang Technological University,
	639798, Singapore.  (e-mail: wentao.zhang@ntu.edu.sg)}
\thanks{Y. Shi is with the Department of Mechanical Engineering, University of Victoria, Victoria, BC V8W 2Y2, Canada. (e-mail: yshi@uvic.ca)}}

\maketitle

\begin{abstract}
This paper proposes  a unified  detection  strategy  against three kinds of attacks   for  multi-agent systems (MASs) which is applicable to both transient  and  steady stages.  For attacks   on the  communication  layer, a watermarking-based detection scheme with {Kullback-Leibler (KL)} divergence is designed. Different from traditional communication schemes, each agent transmits a message set containing  two state values with different types  of watermarking.     It is found that the detection performance is determined by the relevant parameters of the watermarking signal. Unlike the existing detection manoeuvres,  such a scheme is capable of  transient and steady stages. For attacks  on  the   agent layer, a convergence rate related detection
approach   is put forward. It is shown that the resilience of the  considered system is characterized  by the coefficient and offset of the envelope. For hybrid attacks, based on the above detection mechanisms, 
a  general framework  resorting to trusted  agents is presented, which requires  weaker  graph conditions  and less information transmission. Finally, an  example associated  with the platooning of  connected vehicles is  given to support the theoretical results.
\end{abstract}

\begin{IEEEkeywords}
		Multi-agent systems, Watermarking, Detection strategy, Hybrid attacks, Transient stage.
\end{IEEEkeywords}

\section{Introduction}
\label{sec:introduction}
{Due}  to the development of communication  and computer technologies, significant progress has been made in the field of  cyber-physical systems (CPSs). This, however, makes them vulnerable to cyber  attacks. The reasons lie in that, unlike traditional information security, the  security issues in CPSs  display several
remarkable features: 1) The underlying plants are  spatially distributed; 2) The types of attacks launched are diverse even complex; 3) New and unknown attacks are emerging. These   facts have been greatly confirmed 
in  security incidents, for example,  Stuxnet virus invaded Iran's nuclear facilities in 2010 \cite{mo2015physical}, Havex virus attacked SCADA  resulting in disabled hydropower dams in 2014 \cite{maitra2015offensive}.

Existing literature  on the  security of CPSs can be divided into two categories. The  first  studies   the vulnerability and performance limit with the aim of launching  a more cunning attack \cite{zhang2020false,lu2022false,zhang2015optimal}. By contrast, the second focuses  on mitigating the negative effect of attacks, including 
attack mitigation  \cite{kwon2016cyber}, secure state estimation \cite{fawzi2014secure},  and  resilient control \cite{yuan2016resilient}. Apart from  that, attack detection is also an active research field   \cite{pasqualetti2013attack,lu2022detection,carvalho2018detection}. {There have been  numerous   fault/attack detection mechanisms proposed here as well \cite{zhang2010fault,gallo2020distributed,barboni2020detection}.} 
{Additionally, watermarking is commonly utilized in detection work as a supplementary technology that has been extensively employed in industrial settings\cite{yang2022joint,ahmed2022practical}. 
The existing watermarking-based detection strategies  can be divided into two categories: the additive watermarking and the multiplicative watermarking. For the former, a  physical watermarking approach was first  presented  to expose replay attacks by $\chi ^2$ detector   \cite{mo2009secure}.  Furthermore, in order to better quantify the detection effect, watermarking Kullback-Libeler (KL) divergence detector was  proposed in   \cite{mo2009secure,mo2015physical}. In order to reduce   control cost caused by the involved   watermarking, an optimal watermarking  scheduling strategy  \cite{fang2020optimal} and a quickest detection through  parsimonious watermarking protocol  \cite{naha2023quickest} were suggested. To handle   cunning attacks and even   collaborating attacks, some  complex procedures were designed.    A multi-channel watermarking cooperative defense strategy joint with  moving target ideas   has been proposed in \cite{ghaderi2020blended}. In addition, \cite{liu2023proactive} developed  a unified watermarking-based detection framework against  a wider range of attack types. 
	Besides, multiplicative watermarking has also  been  employed  to  tackle the drawbacks of the additive one \cite{ferrari2020switching,ferrari2017detection,teixeira2018detection,ferrari2021detection}.  
 It is noted that the additive watermarking technology itself  has some  shortcomings in defending against  stealthy additive  attacks, and    multiplicative watermarking has  been  presented  to tackle the drawbacks of the additive one.
	In  \cite{ferrari2017detection}  and   \cite{teixeira2018detection}, the sensor outputs were  encrypted with the help of   impulse response filters, and then  were decrypted by equalizing filters to relieve the  performance degradation caused by watermarking. In \cite{ferrari2020switching}, a switching  watermarking filter framework  was   devised to further increase  the confidentiality of  relevant   parameters to the attacker. Recently, a secure estimation issue was also  addressed  in \cite{ferrari2021detection}. It should be noted that in the existing related work, the watermarking  technique is often utilized with estimator/observer, which belongs to the  model-based strategy.}

On the other hand, multi-agent systems (MASs) are featured by inherent properties in CPSs and have witnessed wide  applications in various engineering areas, such as intelligent traffic systems\cite{10014016}, unmanned aerial vehicles  \cite{yu2019distributed}, smart grid systems\cite{10018476} and multi-sensor network\cite{9983527}.  
Due to the dependence on network and computer technology, the properties such as openness, spatial distribution and inter-connection on MASs pave a way to  malicious attacks  within different layers  including control layer,   communication  layer and agent layer \cite{he2021secure}. Typical effort includes but is not limited to Worm virus   \cite{he2021secure}, denial of service (DoS) attacks  \cite{zuo2021resilient} and false data injection attacks  \cite{8840898}. When targeted on the  controller layer, the vulnerability of  MASs can be aroused by malicious computer malware  via implanting some computer worm \cite{he2021secure}. 	
For attacks on the  communication  layer, several   defense mechanisms have been proposed.  With elaborate design of the  attack signal,  it can  destroy  data integrity and  instrumental detection strategies \cite{shames2011distributed,sundaram2010distributed,pasqualetti2011consensus,2022198}. Besides, on account of  the  distributed characteristics, each agent in MASs has multiple information sources. Therefore, additional detection approaches, for example,    multi-hop communication \cite{yuan2021secure} and consensus-based attack detection strategy \cite{mustafa2020resilient} were introduced. 
{A common feature of the aforementioned work is that they are all  built on
	the observer/consensus-based framework. That is to say, the successful defense against attacks  strongly depends
	on the convergence of estimation error which  does not occur during the   transient stage. }

If the attacks  act  on the  agent layer,  they   mainly affect the  consensus of interacting agents. One of the representatives is the  Byzantine attack  \cite{ishii2022overview,leblanc2013resilient,wu2020federated}.  Many  defense strategies were  resorted to   network robustness-based mean-subsequence-reduced (MSR) algorithm. This algorithm  is inspired by  the idea of  ignoring neighboring agents that may have  extreme state values  \cite{ishii2022overview}.  \cite{leblanc2013resilient} designed a weighted-mean-subsequence-reduced (W-MSR) algorithm for  Byzantine attacks. Furthermore, \cite{yan2022resilient} further  expanded the above  results   to  multi-dimensional systems.   	{However, the  robustness of these findings is  somewhat conservative. Actually, several  normal neighbors may also be excluded,  resulting in unnecessary defense cost. In order to locate malicious agents, an  isolation-based approach was proposed  in \cite {9415167}. Nevertheless,  the  trade-off between detection accuracy and system resilience has not been fully investigated. }

{
	By contrast,   the research on hybrid  attacks   is more challenging \cite{zhang2021physical}.
	 Since  Byzantine attacks on the   agent  layer can be regarded   as a special case in the   attack on the  communication  layer from  the  receiving side  as pointed out  in  \cite{fu2019resilient}, the performance of  two kinds  of  attacks is similar for  the receiving agent, causing  additional  difficulty  in  distinguishing  them. Thus, the individual or simple combination of detection strategies will fail.  
	It is noted that  the effective identification of the attack categories helps  to take  more precise countermeasures in the subsequent procedures for communication  or agent layers, thereby  the     reducing defense cost.	 Unfortunately, there are currently few  relevant   work. }

{We  emphasize that for    attacks  on the  communication layer,  no matter whether it is  the  traditional attack detector, the watermarking  detection mechanism, or the consensus-based detector,  they are all  built on the observer/consensus-based framework \cite{zhou2022watermarking,2022198,mustafa2020resilient}. That is to say,  the  successful execution of the detector strongly depends on the  condition that the   estimation error or consensus error is zero. Therefore, they are  
inapplicable  to the transient stage. This in turn  gives an opportunity for  the attackers, especially  the systems involving  frequent dynamic adjustment such as vehicle formation  \cite{bian2019reducing}. In particular,  when the attacks are injected during the transient stage, the observer-based detector will become  invalid. 
For   attacks  on the  agent layer, the MSR  and its variants \cite{leblanc2013resilient,lu2023bipartite,ishii2022overview} may cause error isolation on normal agents, leading to  increased  defense cost. Then it is necessary to implement  attack  detection      in MASs. The existing detector-based isolation work focuses on  recognizing   attacked ones accurately to pave a way for resilient consensus \cite{9415167,2022198}. However,  from the perspective of achieving resilient  consensus,   there is no need to adopt a   zero tolerance attitude towards   malicious agents  that do not affect the convergence of  MASs. 
Moreover, when facing  hybrid attacks  composed of the above two kinds of attacks, the issue under consideration  becomes  more challenging for this situation \cite{zhang2021physical}. The Byzantine attacks on the agent layer can be seen as a specific instance of attacks on the communication layer, particularly from the receiving side  \cite{fu2019resilient}, causing  further hinder  in  distinguishing  them. Besides,   more precise defensive action in the subsequent procedures  for communication  or agent layers can reduce defense cost.
Therefore, it is necessary to establish a unified  detection framework for MASs  over different layers which are valid in  both transient and steady  stages. This is another important motivation for conducting this research.}
This paper studies   the attack detection scheme for MASs. Specially, for attacks  on the  communication  layer, a KL divergence detection scheme is proposed to  transmit the   message set equipped with two  kinds of watermarking. This  extends the detection range to the  transient stage. For attacks  on  the  agent layer,   a detection  scheme based on the convergence rate to prevent normal agents from being misjudged is put forward. Compared with MSR scheme, the proposed method enables to measure a trade-off between resilience and detection performance. Finally,  a unified detection scheme  is presented  for hybrid  attacks. By doing so, we can accurately locate and distinguish  hybrid attacks   when they   occur.

{The main contributions of this paper are summarized as follows:	}

\begin{enumerate}
	\item  {  A watermarking-based detection strategy is proposed with   KL divergence. To implement detection during the  transient stage, each agent transmits a message set  containing two state values. And each  state value is equipped  with multiplicative and additive watermarking  of different parameters. In this way,  the attacks   on the  communication layer   can be detected with the help of   KL divergence after deleting watermarking. A sufficient condition is derived to ensure the detection performance. In contrast to  the existing results of \cite{2022198} and \cite{mustafa2020resilient}, no estimation process or consensus process  is employed. Therefore, the proposed detection scheme is valid for    the  transient stage.}
	
	\item { For attacks  on the  agent layer, an  envelope-based detector is proposed in our work.  Unlike  \cite{leblanc2013resilient} and \cite{  ishii2022overview}, the false isolation of normal agents can be efficiently avoided by employing   a   detector. Moreover, the detector  can tolerate the misbehaving agents whose   state values  are  contained in a  monotonically decreasing  envelope about tracking error. In other words, it can handle  the attacks   that do not affect system convergence. 
		Hence, the proposed  detector can   facilitate  system  resilience while ensuring  stability.}
	
	\item { A  detection scheme for  hybrid  attacks   guaranteeing  transient and steady stages is suggested. This  scheme is  constructed in terms of  the  trusted agents and two-hop communication.  Specifically,  the  robustness of the graph regarding  two-hop communication is utilized  to guarantee that there exists  at least one  trusted agent in  each pair of agents to 
		distinguish situations of hybrid attacks.  Compared with \cite{yuan2021secure}, the proposed method has a moderate graph requirement and less information transmission. 
}   
\end{enumerate}

The outline of this paper is as follows. In Section ${\rm \ref{GOsec2}}$, some preliminaries are  given, including    notations, graph theory and  system description. The detection strategies for three kinds of attacks    and their performance analysis   are  presented in Section ${\rm \ref{GOsec3}}$.  Simulation results  about the platoon of connected vehicles are  provided  in Section ${\rm \ref{GOsec4}}$. Finally, Section ${\rm \ref{GOsec5}}$ concludes this paper.

\section{Preliminaries}\label{GOsec2}

\subsection{Notations and Graph Theory}

Denote $\mathbb{R}^{n}$   the set of  $n$-dimensional real vectors and $\mathbb{R}^{n\times m}$   the set of  $n\times m$-dimensional real matrices. $\mathbb{R}^{+}$ is  the set of positive constants. ${\rm diag}\left\{a_{1},\cdots,a_{n}\right\}$ represents a diagonal matrix. $\bm{I}$ and  $\bm 1$  are    the   identity  matrix and the vector whose all  elements being 1, respectively. $\left\| \cdot \right\|$ is   the Euclidean norm.

Let $\mathcal{G}=(\mathbb{E},\mathbb{V})$ be a directed graph containing  $N+1$ nodes, where $\mathbb{E}$ is the set of edges and $\mathbb{V}$ is the set of nodes.  A path from  $v_{m}$ to $v_{n}$ is a
sequence of distinct nodes $(v_{m}, v_{m,1},  v_{m,2}, . . . ,v_{n})$, where $ (v_{m,j}, v_{m,{j+1}})\in \mathbb{E} $ for $j = 1, ..., l-1$. Such a  path is also referred to an $l$-hop path.  $N_i^{+}$ and $N_{i}^{-}$ are    sets of  in-neighbors and out-neighbors for  agent $i$.  A  weighted adjacency matrix is defined as  $\mathscr{A}=[a_{ij}]\in \mathbb{R}^{N\times N}$ and  $a_{ij}>0$ if $(v_j, v_i) \in \mathbb{E}$,  $a_{ij}=0$  otherwise. The Laplacian matrix of $\mathcal{G}$ is $L=[l_{ij}]\in \mathbb{R}^{N\times N}$, where $l_{ii}=\sum\limits_{i=1,j \neq i}^{N} a_{ij}$ and $l_{ij}=-a_{ij}$, for $i \neq j$.

	\subsection{System  Description}
{Consider a group of $N+1$ leader-following agents   labeled as  $\left\{ 0,\cdots ,N \right\}$, in which agent  $0$  is the leader and the others are the followers. Similar to  \cite{wang2015consensus}, when there is no attack, the discrete-time dynamics of the $i$-th normal agent in  $\left\{ 0,\cdots ,N \right\} $ is }

\begin{equation}\label{GOeq1}
	\begin{aligned}
		{x}_i\left( k+1 \right) =A {x}_i\left( k \right) +B{u}_i\left( k \right),~i=0,\cdots ,N,
	\end{aligned}
\end{equation} 
where  {the time $k$  is the discrete-time index,} ${x}_i\left( k \right) \in \mathbb{R} ^n$ is the state vector, ${u}_i\left( k \right) \in \mathbb{R} $ is the control input and
\begin{equation*}
	\begin{aligned}
		A=\left[ \begin{matrix}
			0&		1&		\cdots&		0\\
			\vdots&		\vdots&		\ddots&		\vdots\\
			0&		0&		\cdots&		1\\
			\rho  _1&		\rho  _2&		\cdots&		\rho  _n\\
		\end{matrix} \right] \in \mathbb{R} ^{n\times n}\,\,\mathrm{and}~ B=\left[ \begin{array}{c}
			0\\
			\vdots\\
			0\\
			1\\
		\end{array} \right] \in \mathbb{R} ^n,
	\end{aligned}
\end{equation*} 
with  $\left\{ \rho _1,\cdots ,\rho _n \right\}$ being the  coefficients determined by the agent's 
dynamic characteristics.

To achieve  leader-following consensus in the  mean square sense,  the   control protocol  can be designed as 

\begin{equation*}\label{GOq2}
	\begin{aligned}
		{u}_i\left( k \right) =K_1{x}_i\left( k \right) +a_i\left( k \right) \sum_{j\in \mathcal{N} _i}{a  _{ij}K_2\left( \widetilde{y}_{ij}\left( k \right) -{x}_i\left( k \right) \right)},
	\end{aligned}
\end{equation*}
where  $ \widetilde{y}_{ij}\left( k \right)={x}_{j}\left( k \right)+{w}_{ij}\left( k \right)$ denotes the information that  agent $i$ receives from
agent $j$. Here ${w}_{ij}\sim N\left( 0,\varSigma _1 \right)$ is  the Gaussian  noise in edge $(j,i)$ with $\sigma _{1}^{2} \in \mathbb{R}^{+}$  the diagonal element of  $\varSigma_1 \in \mathbb{R}^{n\times n}$.  The controller gains   $K_1\,\,=\,\,\left[ -\rho_1+b_1,-\rho_2+b_2-b_1,\cdots ,-\rho_{n-1}+ b_{n-1}-b_{n-2},-\rho_n  \right.\\ \left. -b_{n-1}+1 \right] \in \mathbb{R} ^{1\times n}$, $K_2\,\,=\,\,\left[ b_1,b_2,\cdots ,b_{n-1},1 \right] \in \mathbb{R} ^{1\times n}$ and $a_i\left( k \right) \in \mathbb{R}  >0$ is the time-varying noise-attenuation gain,  please see  \cite{wang2015consensus} for  details.

Now we give some assumptions, which will be used throughout this paper. 

\begin{ass} \hspace{-0.001cm}\cite{wang2015consensus}
	\label{GOas1} 
	
	(A1-1)  The communication
	graph  has a spanning tree;
	
	(A1-2) $\sum\limits_{i=0}^\infty a(k)=\infty$ and $\sum\limits_{i=0}^\infty a^2(k)<\infty$;

	(A1-3) All roots of $s^{n-1}+b_{n-1}s^{n-2}+\cdots +b_2s+b_1=0$ are inside the unit circle.
\end{ass}

\begin{ass}\label{GOas2} \hspace{-0.001cm}\cite{ishii2022overview,AN2021109664}
	($(L,P)$-local attack model) For each agent, there  are at most $L$ misbehaving agents in the in-neighbors of any agent and at most $P$  malicious in-communication channels.
\end{ass} 

\begin{ass}\label{GOas3}
	The range of  states in  normal systems is   disclosed to the attacker, that is, $\epsilon _1=\underset{l\in \left\{ 1,\cdots ,n \right\}}{\min}\left\{ \underset{i\in \mathbb{V}   }{\min}x_{i,~l}\left( k \right) \right\} 
	$ and $\epsilon _2=\underset{l\in \left\{ 1,\cdots ,n \right\}}{\max}\left\{ \underset{i\in \mathbb{V}}{\max}x_{i,~l}\left( k \right) \right\} 
	$ are available  to  the attacker.
\end{ass} 

\begin{rmk}
	In \emph{Assumption} $\ref{GOas2}$,  $L$ and $P$ are known and  determined by  the  prior knowledge of the system, including the topological structure, the vulnerability of  agents  and communication channel \cite{gracy2021security}. Due to the impact of control objectives  and hardware constraints, the state information  of  normal system is  bounded, and will  be  accessible  by the   attacker \cite{carvalho2018detection}. Furthermore, if  system output caused by the attack exceeds the  normal range in reality, the attack can be easily exposed. Therefore, in order to keep stealthy, the attacker will keep  the output signal within a  normal range \cite{pasqualetti2013attack}. More detailed  results will be discussed in the sequel.	
\end{rmk}

Under \emph{Assumption} $\ref{GOas1}$, if the underlying graph of system $(\ref{GOeq1})$ without any  attack contains a spanning tree, the MASs can achieve consensus, i.e.,
\begin{equation*}\label{GOeq2}
	\begin{aligned}
		&\underset{k\rightarrow \infty}{\lim\limits}\mathbb{E}\left\| {x}_i\left( k \right) -{x}_0\left( k \right) \right\| ^2=0,\\
		&\underset{k\rightarrow \infty}{\lim\limits}{\rm sup}~\mathbb{E}\left\| {x}_i\left( k \right) \right\| ^2<\infty. 
	\end{aligned}
\end{equation*}

Note that most  existing findings are only valid for  the steady stage, that is, the time required for the signal  $\eta(k)\triangleq\underset{i\in \mathbb{V}\backslash\{0\}}{\max}\left\{\mathbb{E}\left( \frac{\left\| \left( x_i\left( k \right) -x_0\left( k \right) \right) \right\|}{\left\| x_0\left( k \right) \right\|} \right) \right\}
$ \cite{franklin2002feedback}  to reach and stay within a given tolerance range thereafter,  with the help of estimator \cite{9415167,lu2022detection} or consensus value \cite{mustafa2020resilient}. Such a  scenario is not applicable in practice, especially for multiple dynamic adjustment processes, such as vehicle formation  with leaders \cite{bian2019reducing}. Thus, 
the goal of this paper is  to develop a detection framework for MASs against  attacks on  different layers while guaranteeing its effectiveness in  both  transient and steady stages. 

	\section{Main Results}\label{GOsec3}
Three kinds of attacks   shall be addressed in this part, including  attacks on the    communication  layer, attacks on  the  agent layer, and  hybrid attacks.   

\subsection{Detection  Strategy for Attacks on the   Communication   Layer}\label{opp1}

In this subsection,  we  focus on  the attacks that tamper  data $\widetilde{y}_{ij}(k)$ transmitted in the communication channel between two agents. {Without loss  of generality,  we set the attack strategy on edge $(j,i)$ as    $\widetilde{y}^{a}_{ij}(k)=\varXi _{ij}(k) \widetilde{y}_{ij}(k)+\varLambda_{ij}(k)$  where $\varXi  _{ij}\left( k \right) \in \mathbb{R}^{n\times n}$  is a diagonal matrix with $\varXi  _{ijl}\left( k \right)$ being  its $l$-th diagonal element and $\varLambda_{ij}(k) \in \mathbb{R}^{n}$ with $\varLambda  _{ijl}\left( k \right)$ being  its $l$-th element. It is worth emphasizing that $\varXi _{ij}(k)$ means multiplicative attack and $\varLambda_{ij}(k)$ corresponds to  additive attack.} 

For  MASs, attacks need to  keep stealthy to prevent them from being detected so that  system output operates like a  normal one \cite{2022198}. Thus, according to \emph{Assumption} $\ref{GOas3}$, the  attacker aims to design  a  signal guaranteeing the received state information by normal agent  staying  within the normal range, that is, $\mathbb{E}(\widetilde{y}^{a}_{ijl}(k)) \in \left[ \epsilon _1,\epsilon _2 \right] $, where $\widetilde{y}^{a}_{ijl}(k)$ is the $l$-th entry. Thus, we give  the following claim.  

\begin{proposition}\label{GOpro1}
	For the attacked system $(\ref{GOeq1})$,  $\mathbb{E}(\widetilde{y}^{a}_{ijl}(k)) \in \left[ \epsilon _1,\epsilon _2 \right]$ if and only if  
	\begin{equation}\label{GO0q222}
		\begin{aligned}      
			&\varXi _{ijl}(k) \in \left[ -1,1 \right], \\
			&\varLambda_{ijl}(k) \in \left\{ \begin{array}{l}
				\left[ 0, \epsilon _1+\epsilon _2 \right] , \epsilon _1>0\\
				\left[ \epsilon _1+\epsilon _2,0 \right] , \epsilon _2<0\\
				\left[ \epsilon _1,\epsilon _2 \right] , \epsilon _1<0<\epsilon _2.\\
			\end{array} \right. 
		\end{aligned}
	\end{equation}	 
\end{proposition}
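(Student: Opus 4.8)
The plan is to strip the statement down to an elementary fact about affine maps of an interval into itself, and then to carry out the resulting small case analysis.

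First I would express the quantity being constrained. Since $\widetilde{y}_{ij}(k)=x_j(k)+w_{ij}(k)$ with $w_{ij}(k)$ zero-mean and $\varXi_{ij}(k)$ a deterministic design matrix, taking expectations componentwise gives
\[
\mathbb{E}(\widetilde{y}^{a}_{ijl}(k))=\varXi_{ijl}(k)\,\mathbb{E}(x_{jl}(k))+\varLambda_{ijl}(k).
\]
By \emph{Assumption} $\ref{GOas3}$ the nominal value $v:=\mathbb{E}(x_{jl}(k))$ lies in $[\epsilon_1,\epsilon_2]$, and since only this range is available to the attacker, staying stealthy means the corrupted mean must lie in $[\epsilon_1,\epsilon_2]$ for every admissible $v$. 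Writing $a:=\varXi_{ijl}(k)$ and $b:=\varLambda_{ijl}(k)$, the requirement is exactly that the affine map $v\mapsto av+b$ sends $[\epsilon_1,\epsilon_2]$ into itself; since the image of an interval under an affine map is the interval spanned by the mapped endpoints, this is equivalent to $a\epsilon_1+b\in[\epsilon_1,\epsilon_2]$ and $a\epsilon_2+b\in[\epsilon_1,\epsilon_2]$.

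For the slope bound I would subtract the two endpoint inclusions: $a(\epsilon_2-\epsilon_1)=(a\epsilon_2+b)-(a\epsilon_1+b)\in[\epsilon_1-\epsilon_2,\,\epsilon_2-\epsilon_1]$, so $|a|(\epsilon_2-\epsilon_1)\le\epsilon_2-\epsilon_1$ and hence $\varXi_{ijl}(k)=a\in[-1,1]$ (the degenerate case $\epsilon_1=\epsilon_2$ being trivial). For the offset bound I would split on the sign pattern of $(\epsilon_1,\epsilon_2)$: when $\epsilon_1<0<\epsilon_2$ the choice $v=0$ is admissible and immediately forces $b\in[\epsilon_1,\epsilon_2]$; when $\epsilon_1>0$ (resp. $\epsilon_2<0$) the extreme feasible offsets are produced by the boundary slopes $a=\pm1$ together with $v=\epsilon_1,\epsilon_2$, and the slope-dependent feasible $b$-intervals, unioned over $a\in[-1,1]$, collapse to $[0,\epsilon_1+\epsilon_2]$ (resp. $[\epsilon_1+\epsilon_2,0]$); these are precisely the three lines of $(\ref{GO0q222})$. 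The converse is then a direct substitution check: for $(\varXi_{ijl}(k),\varLambda_{ijl}(k))$ compatible with $(\ref{GO0q222})$, one bounds $av+b$ over $v\in[\epsilon_1,\epsilon_2]$ by its endpoint values $a\epsilon_1+b$ and $a\epsilon_2+b$ and verifies both lie in $[\epsilon_1,\epsilon_2]$ in each of the three regimes.

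The affine-image reduction and the $a\in[-1,1]$ estimate are routine. The part that needs care — and the main obstacle — is the offset bound: the exact feasible set of $\varLambda_{ijl}(k)$ genuinely depends on $\varXi_{ijl}(k)$, so one must take the union of these slope-dependent intervals over $\varXi_{ijl}(k)\in[-1,1]$ and check that it has no gaps and that every stated endpoint is attained, separately in the three sign regimes $\epsilon_1>0$, $\epsilon_2<0$, and $\epsilon_1<0<\epsilon_2$. I would organize the write-up as these three short sub-cases, each treating necessity and sufficiency together.
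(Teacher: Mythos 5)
Your overall route is the same as the paper's: reduce the stealthiness condition to the two endpoint inclusions $\varXi_{ijl}(k)\epsilon_1+\varLambda_{ijl}(k)\in[\epsilon_1,\epsilon_2]$ and $\varXi_{ijl}(k)\epsilon_2+\varLambda_{ijl}(k)\in[\epsilon_1,\epsilon_2]$ (the paper states exactly this system and then invokes ``linear programming''), and your necessity analysis --- subtracting the two inclusions to get $\varXi_{ijl}(k)\in[-1,1]$, and taking the union of the slope-dependent feasible intervals for $\varLambda_{ijl}(k)$ over the three sign regimes --- is correct and considerably more explicit than the paper's one-line appeal.

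The gap is in your converse. The set described by $(\ref{GO0q222})$ is a rectangle in the $(\varXi_{ijl},\varLambda_{ijl})$-plane, whereas the true feasible region cut out by the two endpoint inclusions is a proper polygonal subset of it, precisely because of the coupling you yourself point out. Consequently the ``direct substitution check'' you propose cannot succeed for every pair admitted by $(\ref{GO0q222})$: with $\epsilon_1<0<\epsilon_2$, the pair $\varXi_{ijl}(k)=1$, $\varLambda_{ijl}(k)=\epsilon_2$ satisfies $(\ref{GO0q222})$, yet $\varXi_{ijl}(k)\epsilon_2+\varLambda_{ijl}(k)=2\epsilon_2\notin[\epsilon_1,\epsilon_2]$; similarly, for $\epsilon_1>0$ the pair $\varXi_{ijl}(k)=1$, $\varLambda_{ijl}(k)=\epsilon_1+\epsilon_2$ gives $2\epsilon_1+\epsilon_2>\epsilon_2$. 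So the biconditional read pointwise over the product set is false, and no substitution argument will close it. The statement that your necessity computation actually establishes --- and the only reading under which the proposition (and the paper's ``evident'' step) is salvageable --- is that $(\ref{GO0q222})$ gives the coordinate projections of the feasible region: every feasible pair lies in the rectangle, every $\varXi_{ijl}(k)\in[-1,1]$ admits \emph{some} compatible $\varLambda_{ijl}(k)$, and every $\varLambda_{ijl}(k)$ in the stated interval admits \emph{some} compatible $\varXi_{ijl}(k)$. You should either prove that existential version of the converse (which your union computation already essentially contains) or explicitly restrict the sufficiency claim to the actual polygon; as written, the final paragraph promises a verification that fails on the examples above.
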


\begin{proof}
	Note that   $\mathbb{E}(\widetilde{y}^{a}_{ijl}(k)) \in \left[ \epsilon _1,\epsilon _2 \right]$ is equivalent to	
	\begin{equation*}\label{GOewq3}
		\begin{aligned}
			\left\{ \begin{array}{c}
				\epsilon _1\leqslant \varXi _{ij}(k) \epsilon _1+\varLambda_{ij}(k) \leqslant \epsilon _2\\
				\epsilon _1\leqslant \varXi _{ij}(k) \epsilon _2+\varLambda_{ij}(k) \leqslant \epsilon _2.\\
			\end{array} \right. 
		\end{aligned}
	\end{equation*}
	
	Through  linear programming, the proof is evident.   \end{proof}

To handle the noise,   KL divergence detector and $
\chi ^2-$ detector are common  choices. It is known  that  the former can tackle    disturbance, where    $\chi^2$-detector fails.  Besides, KL divergence   provides  a quantitative measure  on  the distance between two distributions  \cite{mo2015physical}. Thus, the following concept about KL divergence is given. 

{\begin{defn}\label{GOde1}
		Let $f_{a}(\gamma)$ and $f_{b}(\gamma)$ be the probability density functions (PDFs) of $a$ and $b$, the  KL divergence $D_{KL}(a \Vert b)$ between $a$ and $b$ is  
		\begin{equation}\label{GOeq3}
			\begin{aligned}
				D_{KL}(a\parallel b)\triangleq \int_{\left\{ \left. \gamma \right|f_a(\gamma )>0 \right\}}{f_a(\gamma )\log \frac{f_a(\gamma )}{f_b(\gamma )}d\gamma},  
			\end{aligned}
		\end{equation}
		where $\gamma \in \mathbb{R}$  is an  integral variable. For  simplicity, the item   ${\left\{ \left. \gamma \right|f_a(\gamma )>0 \right\}}$ is omitted in the following discussions. 
\end{defn}}

A KL divergence-based detection scheme was proposed to study the resilient  consensus in  \cite{mustafa2020resilient}. However, the scheme becomes  invalid if an  attack is injected in the transient stage, see Fig. $\ref{GOfig23}$ for  details. {Specifically, we set the time  transient stage as the time interval that   the state value of each follower satisfies $\eta(k)\triangleq\underset{i\in \mathbb{V}\backslash\{0\}}{\max}\left\{\mathbb{E}\left( \frac{\left\| \left( x_i\left( k \right) -x_0\left( k \right) \right) \right\|}{\left\| x_0\left( k \right) \right\|} \right) \right\} \geqslant \varsigma$, where $\varsigma \in \mathbb{R}$ is a small  scalar which is  usually chosen as $0.01, 0.02$ or  $0.05$ \cite{franklin2002feedback}. It  corresponds to  {$k\leqslant  14$}, which is  indicated by  the left side of  black dash-dot  line in Fig. $\ref{GOfig23}$.} From Fig. $\ref{GOfig23}$,  it is seen  that  during the transient stage, the KL divergence cannot be guaranteed to stay  within the safe range. As a result, the  detector's alarm  will be  triggered incorrectly.

\vspace{-1.3cm}

\begin{figure}[H]
	\centering
	\includegraphics[width=3.2in,height=2.4in]{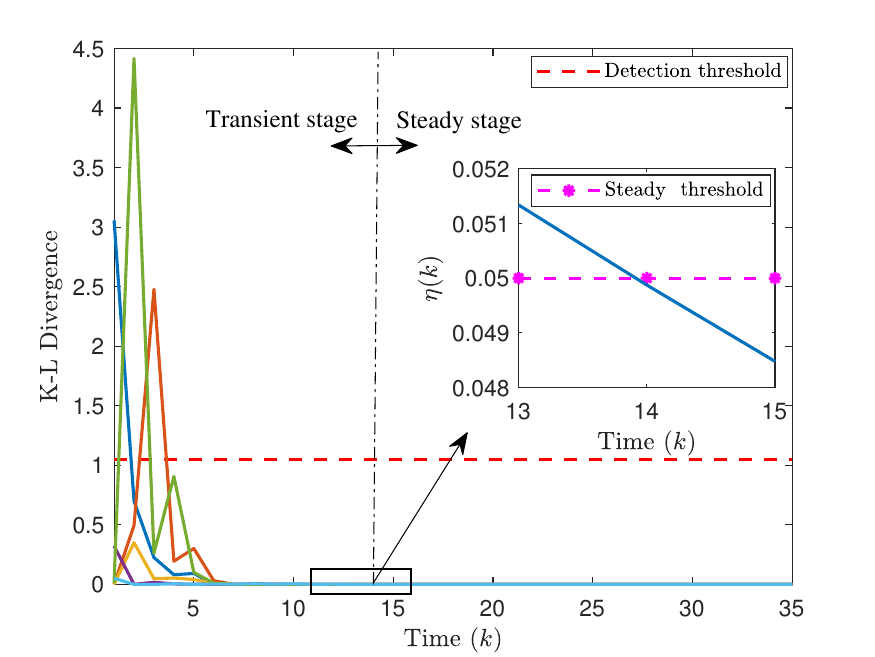}
	\caption{KL divergences of detection strategy in \cite{mustafa2020resilient}.}
	\label{GOfig23}
\end{figure}

\vspace{-1.3cm}

To avoid the defect  incurred by KL divergence, a watermarking-based detection scheme is put forward  for both transient
and steady stages in this  paper, see  Fig. \ref{GOfig14}. More specific, each agent   transmits a message set consisting of  two state values  with multiplicative watermarking $M_1^{-1}(k)\in \mathbb{R} ^{n\times n}$,  $M_2^{-1}(k) \in \mathbb{R} ^{n\times n}$ and additive watermarking $F_1(k) \in  \mathbb{R} ^{n}$, $F_2(k) \in  \mathbb{R} ^{n}$. The key  idea   is to measure  the difference between  two values  to determine whether the data has been tampered. 
To this end, let   {$M_{1}^{-1}(k)\triangleq \lambda_1\bm{I}+\mathrm{diag}\left\{M_{11}^{2}(k), \cdots ,M_{1l}^{2}(k),  \cdots, 	M_{1n}^{2}\left( k \right) \right\}$ with  $ l \in \left\{ 1,\cdots ,n \right\}$,}  where $M_{1l}\left( k \right) \sim \mathcal{N} \left( 0,\sigma _{M_{1}}^{2} \right)$ and   $\lambda_1 \in \mathbb{R}$  is a positive constant. Similarly,  {$M_{2}^{-1}(k)\triangleq \lambda_2\bm{I}+\mathrm{diag}\left\{ M_{21}^{2}(k),\cdots  ,M_{2l}^{2}(k),\cdots ,M_{2l}^{2}\left( k \right) \right\}$} with  $M_{2l}\left( k \right) \sim \mathcal{N} \left( 0,\sigma _{M_{2}}^{2} \right)$. 
$F_{1}(k) \in \mathbb{R}^{n}$  is  a random vector where the $l$-th element satisfies $F_{1l}(k)\sim \mathcal{N} \left( 0,\sigma_{F_{1}}^{2} \right)$ and $F_{2}(k)$ is equipped with the same structure and  $F_{2l}(k)\sim \mathcal{N} \left( 0,\sigma_{F_{2}}^{2} \right)$. In this way, the attack model on edge $(j,i)$ becomes    $\overline{y}^{a}_{1ij}(k)=\varXi _{1ij}(k) \overline{y}_{1ij}(k)+\varLambda_{1ij}(k)$ and $\overline{y}^{a}_{2ij}(k)=\varXi _{2ij}(k) \overline{y}_{2ij}(k)+\varLambda_{2ij}(k)$   where $\varXi  _{rij}\left( k \right) \in \mathbb{R}^{n\times n}$, $\varLambda _{rij}\left( k \right) \in \mathbb{R}^{n}$  $\left(r\in \left\{ 1,2 \right\} \right)$ are the attack signals.   
The defense steps    are summerized in  \emph{Algorithm}  $\ref{alg1}$ and  its  framework is given  as below.

The attack detection suggested  by  hypothesis test \cite{zhou2022watermarking} has the form $D_{KL}(y_{1ij}^{*}(k) \Vert  y_{2ij}^{*}(k))\underset{H_1}{\overset{H_0}{\lessgtr}}\theta$, where $\theta$ is a  pre-set threshold.  $H_0$ and $H_1$  imply
that  edge $(j,i)$ is secure or under attack.

\begin{figure}[H]
	\centering
	\includegraphics[width=3.5in,height=1.5in]{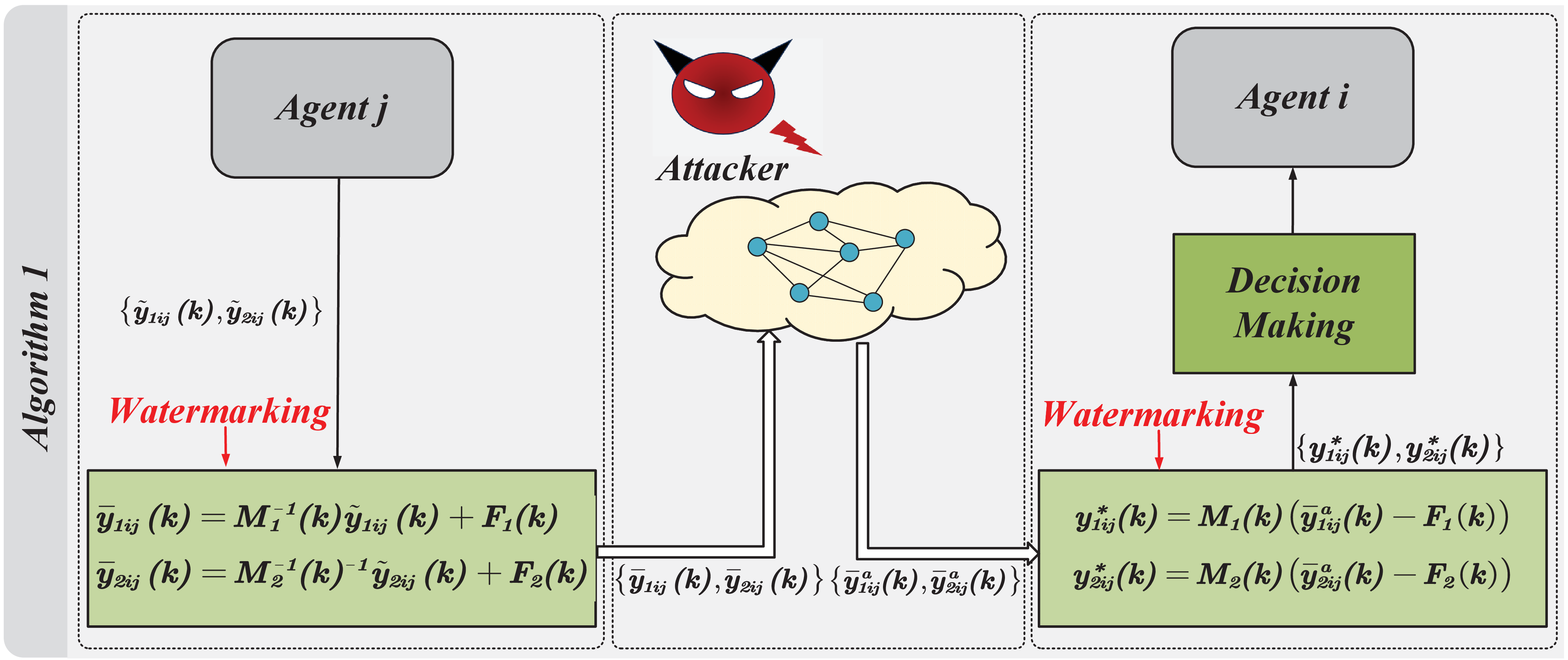}
	\caption{Block diagram of   \emph{Algorithm}  $
		\ref{alg1}$ for  attacks  on the  communication  layer.}
	\label{GOfig14}
\end{figure}

\begin{algorithm}[H]
	\caption{ Watermarking-based Detection Strategy for Attacks on the  Communication  Layer}
	\label{alg1}
	\begin{algorithmic}[1]
		\State{For  agent  $i$,  each agent $j \in {N_{i}^{+}} $  generates   the message set $\left\{\widetilde{y}_{1ij}(k),~\widetilde{y}_{2ij}(k)\right\}$};
		\State{The message set $\left\{\widetilde{y}_{1ij}(k),~\widetilde{y}_{2ij}(k)\right\}$} is equipped with watermarking as  	
		\begin{equation}\label{bbp1}
			\begin{aligned}
				&\overline{y}_{1ij}(k)=M_{1}^{-1}(k)\widetilde{y}_{1ij}(k)+F_1(k),\\
				&\overline{y}_{2ij}(k)=M_{2}^{-1}(k)\widetilde{y}_{2ij}(k)+F_2(k);
			\end{aligned}
		\end{equation}	
		\State{ The message set $\left\{ \overline{y}_{1ij}\left( k \right) ,\overline{y}_{2ij}\left( k \right) \right\}$ is transmitted  into communication edge $(j,i)$};
		\State{ Once  $\left\{ \overline{y}^{a}_{1ij}\left( k \right) ,\overline{y}^{a}_{2ij}\left( k \right) \right\}$ has been  received, the watermarking is removed by performing 
			\begin{equation}\label{GOeq22op2}
				\begin{aligned}
					&y_{1ij}^{*}(k)=M_1(k)\left( \overline{y}^{a}_{1ij}(k)-F_1\left( k \right) \right),\\
					&y_{2ij}^{*}(k)=M_2(k)\left( \overline{y}^{a}_{2ij}(k)-F_2\left( k \right) \right);
				\end{aligned}
			\end{equation}		
		}
		\State Decision making: Calculate $D_{KL}(y_{1ij}^{*}(k) \Vert  y_{2ij}^{*}(k))$.
		\If {$D_{KL}(y_{1ij}^{*}(k) \Vert  y^{*}_{2ij}(k))\leqslant \theta$}
		\State {Communication channel $(j,i)$ is  free of attack;}
		\Else ~{$D_{KL}(y_{1ij}^{*}(k) \Vert  y_{2ij}^{*}(k))> \theta$}		
		\State {Communication channel $(j,i)$ is attacked;}
		\EndIf
	\end{algorithmic}
\end{algorithm}

\vspace{0.5cm}
Next, the effectiveness of \emph{Algorithm}  $\ref{alg1}$ is quantitatively analyzed.  To this end, we first give the  performance of \emph{Algorithm}  $\ref{alg1}$ without attack.

\begin{proposition}\label{GOpro1}
	Under \emph{Algorithm}  $\ref{alg1}$, if edge  $(j,i)$  is  attack-free,  then  
	\begin{equation*}\label{GOeq222}
		\begin{aligned}
			&D_{KL}(y_{2ij}^{*}(k) \Vert  y_{1ij}^{*}(k))\leqslant \theta. 
		\end{aligned}
	\end{equation*}	 
\end{proposition}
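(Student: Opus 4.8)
The plan is to show that, with no attack on the channel, the two de-watermarked signals compared by the KL test are drawn from the same distribution, so their divergence is exactly $0$ and thus cannot exceed the positive preset threshold $\theta$.

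First I would use the attack-free hypothesis: on edge $(j,i)$ it gives $\varXi_{rij}(k)=\bm I$ and $\varLambda_{rij}(k)=0$ for $r\in\{1,2\}$, so the received packets coincide with the transmitted ones, $\overline y^{a}_{rij}(k)=\overline y_{rij}(k)$. Substituting the insertion rule $(\ref{bbp1})$ into the removal rule $(\ref{GOeq22op2})$ and noting that, since $\lambda_r>0$, each diagonal matrix $M_r^{-1}(k)=\lambda_r\bm I+\mathrm{diag}\{M_{r1}^{2}(k),\dots,M_{rn}^{2}(k)\}$ is positive definite and hence invertible with inverse $M_r(k)$, I obtain
\[
y_{rij}^{*}(k)=M_r(k)\bigl(M_r^{-1}(k)\widetilde y_{rij}(k)+F_r(k)-F_r(k)\bigr)=\widetilde y_{rij}(k),\qquad r\in\{1,2\},
\]
i.e. both the multiplicative and the additive watermarks cancel exactly, irrespective of their realized values.

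Then I would match the distributions: both transmitted values $\widetilde y_{1ij}(k)$ and $\widetilde y_{2ij}(k)$ are reports of the same state $x_j(k)$ perturbed by channel noise with the common covariance $\varSigma_1$, so each is distributed as $\mathcal N(x_j(k),\varSigma_1)$ and the two share a common law; by the previous step the same holds for $y_{1ij}^{*}(k)$ and $y_{2ij}^{*}(k)$. Taking $a=y_{2ij}^{*}(k)$ and $b=y_{1ij}^{*}(k)$ in Definition $\ref{GOde1}$, we then have $f_a\equiv f_b$, so $\log\!\bigl(f_a/f_b\bigr)=0$ on $\{\gamma\mid f_a(\gamma)>0\}$, whence $D_{KL}(y_{2ij}^{*}(k)\Vert y_{1ij}^{*}(k))=0\leqslant\theta$.

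The only step that is more than bookkeeping is the distribution matching: I would need to state explicitly that the two transmitted copies carry the same mean and the same noise covariance, so the de-watermarked pair is genuinely identically distributed. Were the two channel-noise covariances different, the attack-free divergence would equal a fixed positive constant and the proposition would instead require $\theta$ to be chosen no smaller than that constant. The cancellation algebra and the evaluation $D_{KL}=0$ are then routine.
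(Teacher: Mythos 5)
Your proposal is correct and follows the route the paper intends: the paper's own proof is the one-line remark that the claim is ``straightforward according to Definition \ref{GOde1},'' and what it leaves implicit is exactly what you spell out --- in the attack-free case the insertion rule \eqref{bbp1} and the removal rule \eqref{GOeq22op2} cancel identically (using invertibility of $M_r^{-1}(k)=\lambda_r\bm I+\mathrm{diag}\{M_{rl}^2(k)\}$ with $\lambda_r>0$), so $y_{rij}^{*}(k)=\widetilde y_{rij}(k)$ for $r\in\{1,2\}$, the two de-watermarked signals share the law $\mathcal N(x_j(k),\varSigma_1)$, and the KL divergence is $0\leqslant\theta$. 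Your closing caveat about the two copies needing a common noise covariance is a fair observation about an assumption the paper leaves tacit, but it does not change the argument.
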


\begin{proof}
	The proof is straightforward  according to  Definition \ref{GOde1}. 
\end{proof}

When  the agent suffers  from attacks, we have the following results.

\begin{Theorem}\label{GOth242}
	Consider the KL divergence $D_{KL}(y_{2ij}^{*}(k) \Vert  y_{1ij}^{*}(k))$ of edge $(j,i)$ with watermarking strategy in Algorithm $\ref{alg1}$. If 
	$\underset{\sigma _{F_1}^{2},\sigma _{F_2}^{2}\rightarrow +\infty}{\lim}\frac{\sigma _{F_1}^{2}}{\sigma _{F_2}^{2}}=0
	$ and $\underset{\lambda_1,\lambda_2\rightarrow +\infty}{\lim}\frac{\lambda_1}{\lambda_2}=0
	$, there exist attacks in  edge $(j,i)$ and  the KL divergence satisfies
	\begin{equation}\label{GOeq99U}
		\begin{aligned}
			\underset{\begin{array}{l}
					\sigma _{F_1}^{2},\sigma _{F_2}^{2}\rightarrow +\infty\\
					\lambda_1,\lambda_2\rightarrow +\infty\\
			\end{array}}{\lim} D_{KL}(y_{1ij}^{*}(k)\Vert y_{2ij}^{*}(k))=+\infty.			
		\end{aligned}
	\end{equation}
\end{Theorem}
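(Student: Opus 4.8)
The plan is to turn the statement into a computation of the Kullback--Leibler divergence between two Gaussian laws and then to analyse that closed form in the prescribed double limit. First I would substitute the de-watermarking rule $(\ref{GOeq22op2})$, the watermarking rule $(\ref{bbp1})$ and the attack model $\overline{y}^{a}_{rij}(k)=\varXi_{rij}(k)\overline{y}_{rij}(k)+\varLambda_{rij}(k)$ into one another. Because $M_r(k)$ and $M_r^{-1}(k)$ are mutually inverse diagonal matrices, they commute with the diagonal matrix $\varXi_{rij}(k)$ and the multiplicative watermarks telescope, leaving
\begin{equation*}
y_{rij}^{*}(k)=\varXi_{rij}(k)\widetilde{y}_{rij}(k)+M_r(k)\bigl(\varXi_{rij}(k)-\bm I\bigr)F_r(k)+M_r(k)\varLambda_{rij}(k),\qquad r\in\{1,2\}.
\end{equation*}
Since $\widetilde{y}_{rij}(k)=x_j(k)+w_{rij}(k)$ with independent Gaussian $w_{rij}\sim\mathcal N(0,\varSigma_1)$ and $F_r(k)\sim\mathcal N(0,\sigma_{F_r}^{2}\bm I)$, for a fixed attack realisation and conditioned on the multiplicative-watermark realisation $M_r(k)$ the vector $y_{rij}^{*}(k)$ is Gaussian with diagonal covariance; its $l$-th coordinate has mean $\mu_{rl}=\varXi_{rijl}(k)x_{jl}(k)+(M_r(k))_{ll}\varLambda_{rijl}(k)$ and variance $\Sigma^{*}_{rl}=\varXi_{rijl}^{2}(k)\sigma_1^{2}+(M_r(k))_{ll}^{2}\bigl(\varXi_{rijl}(k)-1\bigr)^{2}\sigma_{F_r}^{2}$, where $(M_r(k))_{ll}=1/(\lambda_r+M_{rl}^{2}(k))$.

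I would then feed these Gaussians into Definition~$\ref{GOde1}$ via the standard identity
\begin{equation*}
D_{KL}\bigl(y_{1ij}^{*}(k)\Vert y_{2ij}^{*}(k)\bigr)=\frac12\sum_{l=1}^{n}\left[\frac{\Sigma^{*}_{1l}}{\Sigma^{*}_{2l}}-1+\frac{\bigl(\mu_{1l}-\mu_{2l}\bigr)^{2}}{\Sigma^{*}_{2l}}+\ln\frac{\Sigma^{*}_{2l}}{\Sigma^{*}_{1l}}\right],
\end{equation*}
so the theorem reduces to an asymptotic claim about these scalars per coordinate. Letting $\sigma_{F_1}^{2},\sigma_{F_2}^{2}\to+\infty$, $\lambda_1,\lambda_2\to+\infty$ with $\sigma_{F_1}^{2}/\sigma_{F_2}^{2}\to0$ and $\lambda_1/\lambda_2\to0$, I would argue that for any attack genuinely present on edge $(j,i)$ — i.e. $\varXi_{rijl}(k)\neq1$ or $\varLambda_{rijl}(k)\neq0$ for some $l$, while still respecting the stealthiness band so that $\mathbb E(\widetilde y^{a}_{ijl}(k))\in[\epsilon_1,\epsilon_2]$ as in $(\ref{GO0q222})$ — the built-in asymmetry $\lambda_2\gg\lambda_1$, $\sigma_{F_2}^{2}\gg\sigma_{F_1}^{2}$ forces one of the ratios $\Sigma^{*}_{2l}/\Sigma^{*}_{1l}$, $\Sigma^{*}_{1l}/\Sigma^{*}_{2l}$ or the normalised mean gap $(\mu_{1l}-\mu_{2l})^{2}/\Sigma^{*}_{2l}$ to diverge, while the remaining terms are of the same sign or of strictly lower order (using $t-\ln t\to+\infty$ as $t\to+\infty$). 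Summing over $l$ yields $(\ref{GOeq99U})$. For the existential reading of the statement it suffices to exhibit one admissible attack, e.g. $\varXi_{rijl}(k)\equiv0$ with $\varLambda_{rijl}(k)$ anywhere in the interval of $(\ref{GO0q222})$, and to check directly that this choice already makes the sum blow up while keeping the received state in the normal range.

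The hard part is controlling this coupled double limit rigorously. The quotient that governs the blow-up is essentially $\dfrac{(\varXi_{1ijl}-1)^{2}\sigma_{F_1}^{2}/(\lambda_1+M_{1l}^{2})^{2}}{(\varXi_{2ijl}-1)^{2}\sigma_{F_2}^{2}/(\lambda_2+M_{2l}^{2})^{2}}$, which pits the vanishing factor $\sigma_{F_1}^{2}/\sigma_{F_2}^{2}$ against the exploding factor $(\lambda_2/\lambda_1)^{2}$; one must therefore take the limits in the nested order the statement indicates (or invoke the implicit rate-compatibility the construction intends), and carefully isolate the regime in which the bounded measurement-noise contribution $\varXi_{rijl}^{2}\sigma_1^{2}$ dominates one denominator while the inflated watermark contribution dominates the other — this is where the bounds $(\ref{GO0q222})$ on $\varXi_{rijl}$ and $\varLambda_{rijl}$ keep the non-divergent terms under control. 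A secondary subtlety is the randomness of the multiplicative watermark $M_{rl}(k)$: I would condition on its realisation throughout, so the laws above are exactly Gaussian and the bound is uniform over that realisation, rather than integrating it out into a messier mixture. I would also reconcile the two slightly inconsistent statements of the quantity in the paper — the lead-in writes $D_{KL}(y_{2ij}^{*}\Vert y_{1ij}^{*})$ whereas $(\ref{GOeq99U})$ writes $D_{KL}(y_{1ij}^{*}\Vert y_{2ij}^{*})$ — by checking that the chosen rates make precisely the intended ordering the divergent one, since swapping the arguments swaps which $\Sigma^{*}_{rl}$ sits in the denominator.
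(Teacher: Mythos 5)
Your reduction to a closed-form Gaussian KL divergence rests on conditioning on the realisation of the multiplicative watermark $M_r(k)$, and that step is where the argument breaks. The quantity in the theorem is the KL divergence between the \emph{unconditional} laws of $y_{1ij}^{*}(k)$ and $y_{2ij}^{*}(k)$, which are mixtures over the (scaled chi-squared) randomness of $M_{rl}^{2}(k)$, not Gaussians. By joint convexity of the KL divergence, $D_{KL}\bigl(\mathbb{E}_{M}[p_{\cdot\mid M}]\,\Vert\,\mathbb{E}_{M}[q_{\cdot\mid M}]\bigr)\leqslant\mathbb{E}_{M}\bigl[D_{KL}(p_{\cdot\mid M}\Vert q_{\cdot\mid M})\bigr]$, so showing that the conditional KL diverges --- even uniformly over the realisation of $M$ --- gives an upper bound that diverges, which says nothing about the unconditional divergence the theorem asserts. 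The paper's proof works directly with the mixture densities: it writes the law of each coordinate $y_{1l}^{*}$ as a convolution/product of the Gaussian part with the shifted-chi-squared part (equations $(\ref{211})$--$(\ref{223})$), sandwiches the resulting non-Gaussian PDFs between Gaussian densities scaled by constants $\hat{\xi}_{1_1},\hat{\xi}_{1_2}$ bounded away from zero, and only then invokes a Gaussian-KL-type lower bound. It also splits into eight cases (TABLE $\rm{\ref{GOfi98}}$) because the law of $y_{rl}^{*}$ has a qualitatively different form depending on whether the attack on each message is multiplicative, additive, or absent; your single unified formula only exists after the conditioning that cannot be justified.

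A second, unresolved problem is the one you yourself flag as ``the hard part.'' Under your (literal) reading $(M_r(k))_{ll}=1/(\lambda_r+M_{rl}^{2}(k))$, the additive-watermark contribution to the variance is $(\varXi_{rijl}-1)^{2}\sigma_{F_r}^{2}/(\lambda_r+M_{rl}^{2})^{2}$, and whether any of the three candidate terms in the Gaussian KL formula diverges depends on the \emph{joint} rate at which $\sigma_{F_r}^{2}$ and $\lambda_r^{2}$ grow --- information the hypotheses $\sigma_{F_1}^{2}/\sigma_{F_2}^{2}\to 0$ and $\lambda_1/\lambda_2\to 0$ do not supply. You do not resolve this, so the proposal does not actually reach $(\ref{GOeq99U})$; for some admissible rates your ratios stay bounded. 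The paper's expansion $(\ref{GOeq100})$ keeps $\lambda_1$ and $M_{1l}^{2}$ as multiplicative (not reciprocal) factors on $F_1$ and $\varLambda_1$, which is what makes the additive-attack cases (Cases 6--8) diverge as $\lambda_1,\lambda_2\to+\infty$ and the multiplicative-attack cases (Cases 1--5) diverge as $\sigma_{F_2}^{2}\to+\infty$; whichever reading of the $M_r$ notation one adopts, a complete proof must commit to one and verify the asymptotics case by case rather than leave the rate competition open.
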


\begin{proof}
	For the sake of brevity, the  time index $k$ and  the edge   index  $ij$ for edge $(j,i)$ will be omitted in the sequel. Since the structure of  $y^{*}_1$ and $y^{*}_2$ in one message  set is  the same,  we focus on  $y^{*}_1$   and  replace $1$ with $2$ on the index of  relevant parameters  to generate  the second data in the message  set.
	
	For edge $(j,i)$ suffering  from attacks on the  communication layer,  according to $(\ref{bbp1})$ and 	$(\ref{GOeq22op2})$, the  received data after removing watermarking admits  
	\begin{equation}\label{GOeq100}
		\begin{aligned}
			y_{1}^{*}&=\varXi  _1\left( x+w \right) +M_1\varXi  _1F_1+M_1\varLambda _1-M_1F_1\\
			&=\varXi  _1x+\varXi  _1w+\lambda _1\left( \varXi  _1-\bm{I} \right) F_1+\mathrm{diag}\left\{ M_{1l}^{2} \right\} \varLambda _1\\&~~~+\mathrm{diag}\left\{ M_{1l}^{2} \right\} \left( \varXi  _1-\bm{I} \right) F_1+\lambda _1 \varLambda _1,
		\end{aligned}
	\end{equation}
	and the  $l$-th element of $y_{1}^{*}$  has the form           
	\begin{equation}\label{GOeq1001}
		\begin{aligned}
			y_{1l}^{*}&=\varXi  _{1l}x_l+\varXi  _{1l}w_l+\lambda _{1l}\left( \varXi  _{1l}-1 \right) F_{1l}+M_{1l}^{2}\varLambda _{1l}\\&~~~+M_{1l}^{2}\left( \varXi  _{1l}-1 \right) F_{1l}+\lambda _1\varLambda _{1l}.		
		\end{aligned}
	\end{equation}
	
	We divide  the proof into  8 cases under different kinds of attack parameters, see TABLE $\rm{\ref{GOfi98}}$. 
	Here  $S_1=\left\{ \varXi  _{1l}|\varXi  _{1l}\rightarrow 1 \right\} \bigcup{\left\{ \varXi  _{1l}|\varXi  _{1l}=1 \right\}}$ and $S_2=\left\{ \varXi  _{2l}|\varXi  _{2l}\rightarrow 1 \right\} \bigcup{\left\{ \varXi  _{2l}|\varXi  _{2l}=1 \right\}}$. In this way, $\varXi  _{1r}\notin S_r$ and $\varLambda_{1r}\neq0$ $(r\in \left\{ 1,2 \right\})$ correspond to  multiplicative  and  additive attacks respectively.

\begin{table}[ht]
	\caption{Classification of attack parameters  selection}
	\centering
	\scalebox{0.9}{
		\begin{tblr}{
				cells = {c},
				cell{1}{1} = {c=2,r=2}{},
				cell{1}{3} = {c=2}{},
				cell{1}{5} = {c=2}{},
				cell{3}{1} = {r=2}{},
				cell{3}{3} = {c=2,r=2}{},
				cell{3}{5} = {r=2}{},
				cell{3}{6} = {r=2}{},
				cell{5}{1} = {r=2}{},
				cell{5}{3} = {c=2}{},
				cell{6}{3} = {c=2}{},
				vlines,
				hline{1,3,5,7} = {-}{},
				hline{2} = {3-6}{},
				hline{4} = {2}{},
				hline{6} = {2-6}{},
			}
			{\diagbox[innerwidth=2.8cm]{$y_{2l}^{*}$}{$y_{1l}^{*}$\tnote{1}}} &   & $\varXi  _{1l}\notin S_1$\tnote{1}&   &$\varXi  _{1l}\in S_1 $&   \\
			&   & $\varLambda_{1l}=0$ &$\varLambda_{1l}\neq0$& $\varLambda_{1l}=0$ &$\varLambda_{1l}\neq0$\\
			$\varXi  _{2l}\notin S_2$&$\varLambda_{2l}=0$&	\bm{{Case~1}}&   &	\bm{{Case~2}}&	\bm{{Case~3}}\\
			&$\varLambda_{2l}\neq0$&   &   &     &   \\
			$\varXi  _{2l}\in S_2$&$\varLambda_{2l}=0$&	\bm{{Case~4}}&   &Security &	\bm{{Case~7}}\\
			&$\varLambda_{2l}\neq0$&	\bm{{Case~5}}&   &	\bm{{Case~8}}&	\bm{{Case~6}}
	\end{tblr}}
	\label{GOfi98}
\end{table}

First of all, we will consider the situations where  there exist multiplicative attacks destroying the message set.

	\bm{\mathrm{Case~1:}}	For  convenience, we rewrite  $(\ref{GOeq1001})$ as  {$y_{1l}^{*}=\hat{y}^{*}_{1l}+\hat{y}^{*}_{2l}$}  where {\begin{subequations}\label{testing}
		\begin{align}
			\label{test_1}&\hat{y}^{*}_{1l}=\varXi  _{1l}x_l+\varXi  _{1l}w_l,\\
			\label{test_2}&\hat{y}^{*}_{2l}=\left( M_{1l}^{2}+\lambda _1 \right) \left( \left( \varXi  _{1l}-1 \right) F_{1l}+\varLambda _{1l} \right). 
		\end{align}
\end{subequations}}

{	With  $\eqref{test_2}$,  set $\hat{y}^{*}_{2l}=\hat{y}^{*}_m\hat{y}^{*}_n$ where  $\hat{y}^{*}_m= M_{1l}^{2}+\lambda _1$ and $\hat{y}^{*}_n= \left( \varXi  _{1l}-1 \right) F_{1l}+\varLambda _{1l}$.}

It is obvious that the PDFs of $\hat{y}^{*}_m$ and $\hat{y}^{*}_n$ become

{
	\begin{subequations}\label{211}
		\begin{align}
			\label{212}&f_{\hat{y}^{*}_m}(\gamma )=\begin{cases}
				\frac{\exp \left( -\left( \gamma -\lambda _1 \right) /2\sigma _{M_{1}}^{2} \right)}{\sqrt{2\pi \left( \gamma -\lambda _1 \right) \sigma _{M_{1}}^{2}}},~\gamma >\lambda _1\\
				0,~\gamma \leqslant \lambda _1,\,\,\\
			\end{cases}\\
			&\label{213}f_{\hat{y}^{*}_n}(\gamma )=\frac{\exp \left( -\left( \gamma -\varLambda _{1l} \right) ^2/2\left( \varXi  _{1l}-1 \right) ^2\sigma _{F_{1}}^{2} \right) }{\sqrt{2\pi \left( \varXi  _{1l}-1 \right) ^2\sigma _{F_{1}}^{2}}}.
		\end{align}
\end{subequations}}

In terms of $(\ref{212})$ and  $(\ref{213})$, the PDF  of  { $\hat{y}^{*}_{2l}$} is 
 {	\begin{equation}\label{223}
		\begin{aligned}
			f_{\hat{y}^{*}_{2l}}(\gamma )&=\int_{\lambda _1}^{+\infty}{\frac{1}{\left| z \right|}}f_{\hat{y}^{*}_m}(z )f_{\hat{y}^{*}_n}(\frac{\gamma}{z})dz\\
			&\xlongequal{\left( a \right)}
			\int_{\lambda _1}^{+\infty}g_{\hat{y}^{*}_{2l}}(z)dz,
		\end{aligned}
\end{equation}
where $(a)$ holds for  $\lambda_{1} \in \left( 0,+\infty \right)$ and $g_{\hat{y}^{*}_{2l}}(z)=			
	{\frac{1}{z}}f_{\hat{y}^{*}_m}(z )f_{\hat{y}^{*}_n}(\frac{\gamma}{z})$}.

Based on $(\ref{213})$, it turns out that 
 {
	\begin{equation}\label{225ss}
		\begin{aligned}
			&f_{\hat{y}^{*}_n}\left( \gamma \right)=\int_{\lambda _1}^{+\infty}{g_{{\hat{y}^{*}_n}}(z)}dz,
		\end{aligned}
	\end{equation}
	where
	\begin{equation*}\label{225osd}
		\begin{aligned}
			g_{{\hat{y}^{*}_n}}(z)=&\exp \left( -\ln ^2\left( z-\lambda _1 \right) +2\ln \left( z-\lambda _1 \right) \right)\frac{f_{\hat{y}^{*}_n}(\gamma)}{e\sqrt{\pi}\left( z-\lambda _1 \right)}.
		\end{aligned}
\end{equation*}}

From $(\ref{223})$ and $(\ref{225ss})$, it is straightforward that there exists a  positive  constant $\hat{\xi}_{1_1}$ which is  bounded away from zero, such that $\frac{1}{\hat{\xi}_{1_1}}f_{\hat{y}^{*}_n}\left( \gamma \right) <\,\,f_{y_{2l\,\,}}\left( \gamma \right)$. In this way, the PDF of $y_{1l}^{*}$ can be scaled as  
 { 
	\begin{equation}\label{225sd}
		\begin{aligned}
			f_{y_{1l}^{*}}\left( \gamma \right) &=\int_{-\infty}^{+\infty}{f_{\hat{y}^{*}_{1l\,\,}}\left( z \right) f_{\hat{y}^{*}_{2l\,\,}}\left( \gamma -z \right)}dz
			\\&<\frac{1}{\hat{\xi}_{1_1}}\int_{-\infty}^{+\infty}{f_{\hat{y}^{*}_{n\,\,}}\left( z \right) f_{\hat{y}^{*}_{2l\,\,}}\left( \gamma -z \right)}dz
			\\&=\frac{\exp \left( -\frac{\left( \gamma -\left( \varLambda _{1l}+\varXi  _{1l}x_l \right) \right) ^2}{2\left( \left( \varXi  _{1l}-1 \right) ^2\sigma _{F_1}^{2}+\varXi  _{1l}^{2}\sigma _{1}^{2} \right)} \right)}{\hat{\xi}_{1_1}\sqrt{2\pi \left( \left( \varXi  _{1l}-1 \right) ^2\sigma _{F_1}^{2}+\varXi  _{1l}^{2}\sigma _{1}^{2} \right)}}.
		\end{aligned}
\end{equation}}
Similarly, the PDF of  $y_{2l}^{*}$ can be transformed by inequality manipulation  into
 { 
	\begin{equation}\label{225sf}
		\begin{aligned}
			f_{y_{2l}^{*}}\left( \gamma \right) >\frac{\exp \left( -\frac{\left( \gamma -\left( \varLambda _{2l}+\varXi  _{2l}x_l \right) \right) ^2}{2\left( \left( \varXi  _{2l}-1 \right) ^2\sigma _{F_2}^{2}+\varXi  _{2l}^{2}\sigma _{1}^{2} \right)} \right)}{\hat{\xi}_{1_2}\sqrt{2\pi \left( \left( \varXi  _{2l}-1 \right) ^2\sigma _{F_2}^{2}+\varXi  _{2l}^{2}\sigma _{1}^{2} \right)}}.
		\end{aligned}
\end{equation}}

From $(\ref{225sd})$ and $(\ref{225sf})$, it yields 
\begin{equation*}\label{225shy}
	\begin{aligned}
		&D_{KL}(y_{1ij}^{*}\Vert y_{2ij}^{*})
		\\>&\frac{1}{2\hat{\xi}_{1_1}}\sum_{l=1}^n\left( \log \frac{\left( \varXi  _{2l}-1 \right) ^2\sigma _{F_2}^{2}+\varXi  _{2l}^{2}\sigma _{1}^{2}}{\left( \varXi  _{1l}-1 \right) ^2\sigma _{F_1}^{2}+\varXi  _{1l}^{2}\sigma _{1}^{2}}-1
		\right.\\ &\left.+\frac{\left( \varXi  _{1l}-1 \right) ^2\sigma _{F_1}^{2}+\varXi  _{1l}^{2}\sigma _{1}^{2}}{\left( \varXi  _{2l}-1 \right) ^2\sigma _{F_2}^{2}+\varXi  _{2l}^{2}\sigma _{1}^{2}}
		\right.\\ &\left.+\frac{\left( \left( \varLambda _{2l}+\varXi  _{2l}x_l \right) -\left( \varLambda _{1l}+\varXi  _{1l}x_l \right) \right) ^2}{\left( \varXi  _{2l}-1 \right) ^2\sigma _{F_2}^{2}+\varXi  _{2l}^{2}\sigma _{1}^{2}} \right)+\frac{n}{\hat{\xi}_{1_1}}\log \frac{\hat{\xi}_{1_2}}{\hat{\xi}_{1_1}}.
	\end{aligned}
\end{equation*}

Now we  can see that $\varXi  _{1l}$, $\varXi  _{2l}$, $\hat{\xi}_{1_h}$ and  $\sigma _{1}^{2}$  with $h\in \left\{ 1,2 \right\} $ are bounded.  Therefore, if   $\underset{\sigma _{F_1}^{2},\sigma _{F_2}^{2}\rightarrow +\infty}{\lim}\frac{\sigma _{F_1}^{2}}{\sigma _{F_2}^{2}}=0
$, we have 

\begin{equation}\label{2225sf}
	\begin{aligned}
		\underset{\sigma _{F_1}^{2},\sigma _{F_2}^{2}\rightarrow +\infty}{\lim} D_{KL}(y_{1ij}^{*}||y_{2ij}^{*})=+\infty.
	\end{aligned}
\end{equation}

\bm{{Case~2:}} In this condition, only  $\widetilde{y}_{2}$ in the message set  transmitted suffers from the attack such that  $y_{1l}^{*}=x_{l}+w_{1l}$, where the PDF is 
 {
	$$
	f_{y_{1l}^{*}}(\gamma )=\frac{\exp \left( -\left( \gamma -x_l \right) ^2/2\sigma _{1}^{2} \right)}{\sqrt{2\pi \sigma _{1}^{2}}}.
	$$}

By following a similar procedure
of \bm{{Case~1}}, it indicates 

\begin{equation}\label{225sgh}
	\begin{aligned}
		&D_{KL}(y_{1ij}^{*}\Vert y_{2ij}^{*})
		\\>&\frac{1}{2}\sum_{l=1}^n\left( \log \frac{\left( \varXi  _{2l}-1 \right) ^2\sigma _{F_2}^{2}+\varXi  _{2l}^{2}\sigma _{1}^{2}}{\sigma _{1}^{2}}-1
		\right.\\ &\left.+\frac{\sigma _{1}^{2}}{\left( \varXi  _{2l}-1 \right) ^2\sigma _{F_2}^{2}+\varXi  _{2l}^{2}\sigma _{1}^{2}}
		\right.\\ &\left.+\frac{\left( \left( \varLambda _{2l}+\varXi  _{2l}x_l \right) -x_l \right) ^2}{\left( \varXi  _{2l}-1 \right) ^2\sigma _{F_2}^{2}+\varXi  _{2l}^{2}\sigma _{1}^{2}} \right)+n\log \hat{\xi}_{1_2}.
	\end{aligned}
\end{equation}

Based on $(\ref{225sgh})$, since $\varXi  _{2l}$ and  $\hat{\xi}_{1_2}$  are bounded, we have
\begin{equation}\label{2w5sf}
	\begin{aligned}
		\underset{\sigma _{F_2}^{2}\rightarrow +\infty}{\lim} D_{KL}(y_{1ij}^{*}(k)||y_{2ij}^{*}(k))=+\infty.
	\end{aligned}
\end{equation}

\bm{{Case~3:}} In this scenario, $\widetilde{y}_{1}$ only suffers from the  additive attack and  {$y_{1l}^{*}=\hat{y}^{*}_{a}+\hat{y}^{*}_{b}$
	, where $\hat{y}^{*}_{a}=x_l+w_{l}+\lambda _1\varLambda _{1l}$ and $\hat{y}^{*}_{b}= M_{1l}^{2}\varLambda _{1l}$. If $\varLambda _{1l}>0$, the  PDFs  of $\hat{y}^{*}_{a}$ and $\hat{y}^{*}_{b}$ are
 { 
		\begin{subequations}\label{2aae}
			\begin{align}
				\label{2aw1}&f_{\hat{y}^{*}_a}(\gamma )=\frac{\exp \left( -\left( \gamma -x_l-\lambda _1\varLambda _{1l} \right) ^2/2\sigma _{1}^{2} \right)}{\sqrt{2\pi \sigma _{1}^{2}}},
				\\
				&\label{2aw2}f_{\hat{y}^{*}_b}(\gamma )=\begin{cases}
					\frac{\exp \left( -\gamma /2\sigma _{M_1}^{2}\varLambda _{1l} \right)}{\sqrt{2\pi \gamma \varLambda _{1l}\sigma _{M_1}^{2}}},~\gamma >0\\
					0,~ \gamma \leqslant 0.\,\,\\
				\end{cases}
			\end{align}
\end{subequations}}}

According to $(\ref{2aw1})$ and $(\ref{2aw2})$, the PDF of $y_{1l}^{*}$  is 
 { 
	\begin{equation*}\label{2s25}
		\begin{aligned}
			f_{y_{1l}^{*}}(\gamma )=\int_{0}^{+\infty}{\frac{\exp \left( -\frac{z}{2\sigma _{M_1}^{2}\varLambda _{1l}}-\frac{\left( \gamma -z-x_l-\lambda _1\varLambda _{1l} \right) ^2}{2\sigma _{1}^{2}} \right)}{\sqrt{4\pi ^2\sigma _{1}^{2}\varLambda _{1l}\sigma _{M_1}^{2}\sigma _{1}^{2}z}}dz}.
		\end{aligned}
\end{equation*}}

$(\ref{2aw1})$  can be further transformed into 
 {\begin{equation}\label{22ess}
		\begin{aligned}
			&f_{\hat{y}^{*}_a}\left( \gamma \right)=\int_{0}^{+\infty}{g_{{\hat{y}^{*}_a}}(z)}dz,
		\end{aligned}
	\end{equation}
	where
	\begin{equation*}\label{29sd}
		\begin{aligned}
			g_{{\hat{y}^{*}_a}}(z)=\exp \left( -\ln ^2\left( z\right) +2\ln \left( z\right) \right)\frac{f_{\hat{y}^{*}_a}(\gamma)}{e\sqrt{\pi}\left( z\right)}.
		\end{aligned}
\end{equation*}}

Similar to $(\ref{225sd})$, there exists a  positive  constant $\hat{\xi}_{3_1}$ which is  bounded away from zero, such that $\frac{1}{\hat{\xi}_{3_1}}f_{\hat{y}^{*}_n}\left( \gamma \right) <\,\,f_{y_{1l}^{*}}\left( \gamma \right)$. Then,  
\begin{equation*}\label{225sh2y}
	\begin{aligned}
		&D_{KL}(y_{1ij}^{*}\Vert y_{2ij}^{*})
		\\>&\frac{1}{2\hat{\xi}_{3_1}}\sum_{l=1}^n\left( \log \frac{\left( \varXi  _{2l}-1 \right) ^2\sigma _{F_2}^{2}+\varXi  _{2l}^{2}\sigma _{1}^{2}}{\sigma _{1}^{2}}-1
		\right.\\ &\left.+\frac{\sigma _{1}^{2}}{\left( \varXi  _{2l}-1 \right) ^2\sigma _{F_2}^{2}+\varXi  _{2l}^{2}\sigma _{1}^{2}}
		\right.\\ &\left.+\frac{\left( \left( \varLambda _{2l}+\varXi  _{2l}x_l \right) -\left( \varLambda _{1l}+x_l \right) \right) ^2}{\left( \varXi  _{2l}-1 \right) ^2\sigma _{F_2}^{2}+\varXi  _{2l}^{2}\sigma _{1}^{2}} \right)+\frac{n}{\hat{\xi}_{3_1}}\log \frac{\hat{\xi}_{1_2}}{\hat{\xi}_{3_1}}.
	\end{aligned}
\end{equation*}

Performing the same procedure  as \bm{{Case~2}}, one gets  
\begin{equation}\label{235sf}
	\begin{aligned}
		\underset{\sigma _{F_2}^{2}\rightarrow +\infty}{\lim} D_{KL}(y_{1ij}^{*}||y_{2ij}^{*})=+\infty.
	\end{aligned}
\end{equation}

Since the treatment of scenario  $\varLambda _{1l}<0$ is similar to the one that $\varLambda _{1l}>0$, no relevant research will be further conducted.

For \bm{{Case~4}} and \bm{{Case~5}}, through the same analytical framework in \bm{{Case~2}} and \bm{{Case~3}}, it follows that 
\begin{equation}\label{23533f}
	\begin{aligned}
		\underset{\sigma _{F_1}^{2}\rightarrow +\infty}{\lim} D_{KL}(y_{1ij}^{*}||y_{2ij}^{*})=+\infty.
	\end{aligned}
\end{equation}

Next, we will address  the situations that the message set only suffers from the additive attack.

\bm{{Case~6:}} In this scenario, we have 
\begin{equation*}\label{GOeq1w0}
	\begin{aligned}
		y_{1l}^{*}=x_l+w_l+M_{1l}^{2}\varLambda _{1l}+\lambda_1\varLambda _{1l},\\
		y_{2l}^{*}=x_l+w_l+M_{2l}^{2}\varLambda _{2l}+\lambda_2\varLambda _{2l}.		
	\end{aligned}
\end{equation*}

Spurred by the arguments in \bm{{Case~2}} and \bm{{Case~3}}, it is indicated  that  if   $\underset{\lambda_1,\lambda_2\rightarrow +\infty}{\lim}\frac{\lambda_1}{\lambda_2}=0
$, we have 
\begin{equation}\label{235223f}
	\begin{aligned}
		\underset{\lambda_1,\lambda_2\rightarrow +\infty}{\lim} D_{KL}(y_{1ij}^{*}||y_{2ij}^{*})=+\infty.
	\end{aligned}
\end{equation}

As for \bm{{Case~7}} and \bm{{Case~8}}, the above conclusion can also be drawn out through similar technical  routes in \bm{{Case~6}} and will not be repeated here. 

Based on the statements  discussed above, the proof is thus completed.
\end{proof}

\vspace{0.7cm}
\begin{rmk}
	According to the proof of Theorem $\ref{GOth242}$, we can see that additive attacks are subject to multiplicative watermarking while multiplicative attacks will be affected by  additive watermarking. In  \emph{Algorithm}  $\ref{alg1}$, two types of watermarking are involved, allowing for simultaneous disclosure of additive and multiplicative attacks.
\end{rmk}
\vspace{0.7cm}
\begin{rmk}
	In contrast to  \cite{9146361} and  \cite{mustafa2020resilient}, the proposed scheme is capable of  transient stage since the analysis process spans the entire  system evolution time. In other words,  it  is  not only   applicable to the situation  $k\rightarrow \infty$. The applicability of transient stage  comes from  our special provision that each agent  transmits a message set  on the communication layer, rather than a single state value  under traditional communication strategies. At the same time, in order to prevent attackers from eavesdropping on the detection strategy, we match different watermarking  strategies for different state values in one  message set. This lays the foundation for real-time data verification subsequently.
\end{rmk}

 {
	\begin{rmk}
		As shown in the proof of Theorem $\ref{GOth242}$, for the attacks containing  multiplicative signals which correspond to Cases 1-5, the additive watermarking with parameters  $\sigma _{F_l}^{2}, l\in \left\{ 1,2 \right\} $ plays a major  role. See $(\ref{2225sf})$ in Case 1, $(\ref{2w5sf})$ in  Case 2, $(\ref{235sf})$ in  Case 3 and $(\ref{23533f})$ in  Cases  4 and 5.
		For  Cases 6-8 having single  additive attacks, it turns  to  multiplicative watermarking with relevant parameters as  ${\lambda_l}, l\in \left\{ 1,2 \right\} $ to keep  a central position in attack detection. Therefore, the parameters involving  both   multiplicative and additive watermarking are settled in $(\ref{GOeq99U})$ in Theorem $\ref{GOth242}$.
		Besides, once an attack occurs,  the inverse process under the watermarking  removal mechanism at the receiving  end becomes invalid. This leads to the presence of watermarking   relevant statistical characteristics  in the attacked data in the message set $\left\{ \overline{y}^{a}_{1ij}\left( k \right) ,\overline{y}^{a}_{2ij}\left( k \right) \right\}$ at the receiving end, which   is the key  factor that exposes an attack on the communication layer.
\end{rmk}}

 {
\begin{rmk}
There are currently  numerous  model-based fault/attack detection mechanisms available.   In \cite{zhang2010fault}, an adaptive fault diagnosis method was   proposed by synthesizing fault detection estimators and a bank of fault isolation estimators. A detector based on the  Luenberger observer together with a series of   unknown input observers was  designed to implement distributed detection in DC microgrids in \cite{gallo2020distributed}. In \cite{barboni2020detection}, a detection mechanism with the foundation of   model-based observer  was   put forward for covert attacks in interconnected systems to implement distributed attack detection among subsystems. 
However, a common feature of the aforementioned work is that they  are all  built on the observer-based  framework. That is to say, the successful defense against attacks  strongly depends on the  convergence of estimation error.  This also leads to its inapplicability for   transient stage.  
\end{rmk}}

\subsection{Detection  Strategy for Attacks on the Agent layer}\label{opp2}

In recent years, the attack  issue  of  agent layer has also received much  attention  \cite{leblanc2013resilient,wu2020federated}. Typical attacks launched include malicious attacks and Byzantine attacks.
In this subsection, we focus on  the  latter  which is implemented  by  sending  malicious information to different out-neighbors  \cite{ishii2022overview}. Thus it is  a  more  threatening one than malicious attack which spreads influence through  broadcasting of  malicious agents. Besides, we pay attention to the Byzantine attacks  that  shake the convergence of MASs.

The existing literature  shows that the norm of tracking error between the leader and followers  without attack has an envelope of the upper bound \cite{cheng2016convergence}. That is, if the   system operates normally, the norm of tracking error will always be restricted  by an  envelope.  In this paper, we choose the envelope with the form 
$$\tau \left( k \right)=\varrho e^{ -g(k)},$$ 
where $g(k)\in \mathbb{R}$ is a positive increasing function with  time $k$. Moreover,  the consensus  problem in  MASs is usually recast into the stability issue  of error system. To this end, we define the  error signal  $d_{ij}(k)\triangleq \mathbb{E}\left\| y_{ij}\left( k \right)-x_{i}\left( k \right)\right\|$. In this  section, in  order to concentrate defensive forces on attacks that affect system performance and avoid unnecessary losses,  we investigate  a class of attackers that impact the convergence performance which manifests as  overstepping  the  envelope of state error in normal systems. For convenience, the Byzantine attacks mentioned below refer to the ones  that the state error exceeds the envelope.

If $d_{ij}(k) <d_{ij}(k-1) \left( \tau \left( k \right) +\delta \right)$,  agent  $j$ is free of  Byzantine attack; otherwise,   $j$  is appointed to be a  Byzantine agent.  Here    $\delta \in \mathbb{R}$ is an  offset  indicating the resilience of system. \emph{Algorithm} $\ref{alg2}$ illustrates  the detailed steps for Byzantine attack detection strategy and  its  skeleton is shown as below.

\begin{figure}[H]
	\centering
	\includegraphics[width=3.45in,height=0.81in]{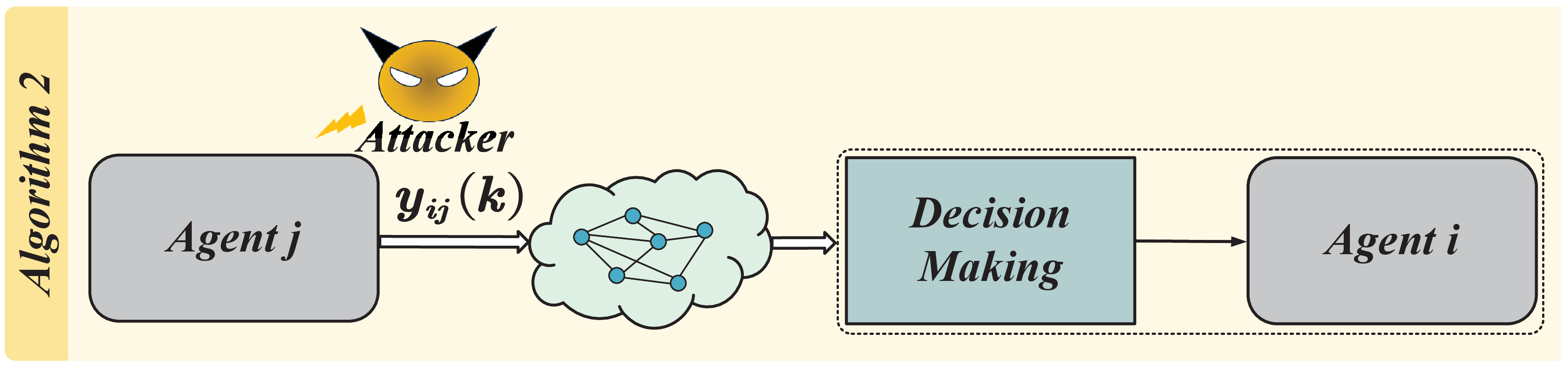}
	\caption{Block diagram of  \emph{Algorithm}  $
		\ref{alg2}$ for  attacks  on the  agent  layer.}
	\label{GOfig142}\end{figure}

\begin{algorithm}[H]
	\caption{Byzantine Attacks Detection Strategy}
	\label{alg2}
	\begin{algorithmic}[1]
		\State{Each agent $j$ with $j\in{N_{i}^{+}}$ sends the message  $y_{ij}(k)$}  into communication edge $(j,i)$;
		\State{Decision making: Calculate $d_{ij}\left( k \right)$ and $\tau(k)$;} 
		\If {$\mathbb{E}\left(d_{ij}\left( k \right)\right)  \leqslant\sqrt{\frac{\epsilon_{1}^{2}+\epsilon_{2}^{2}}{\epsilon _{2}^{2}}} \mathbb{E}\left(d_{ij}\left( k-1 \right)\right)  \left( \tau \left( k \right) +\delta \right)$}
		\State {Agent $j$ is not attacked;}
		\Else {~$\mathbb{E}\left(d_{ij}\left( k \right)\right) >\sqrt{\frac{\epsilon _{1}^{2}+\epsilon _{2}^{2}}{\epsilon _{2}^{2}}}\mathbb{E}\left(  d_{ij}\left( k-1 \right) \right) \left( \tau \left( k \right) +\delta \right) $}		
		\State { $j$ is a  Byzantine agent;}
		\EndIf
	\end{algorithmic}
\end{algorithm}

Before demonstrating  the effectiveness of  \emph{Algorithm}  $\ref{alg2}$, we first introduce  Lemma $\ref{GOle1}$ to assist the   subsequent analysis. 

\begin{Lemma}\label{GOle1}
	For  vectors $\varGamma \in \mathbb{R}^{n}$ and $\varOmega \in \mathbb{R}^{n}$ in which $\varGamma_{i}, \varOmega_{i}\in \left[ \varrho _1,\varrho _2 \right]$ with $ i\in \left\{ 1,\cdots ,n \right\} $, and  $\varGamma_{i}$ and $\varOmega_{i}$ are the $i$-th element of $\varGamma$ and $\varOmega$, we have
 {
		\begin{equation}\label{GOeq1929}
			\begin{aligned}
				\left\| \varGamma \right\| +\left\| \varOmega \right\| \leqslant\sqrt{\frac{\varrho _{1}^{2}+\varrho _{2}^{2}}{\varrho _{1}^{2}}}\left\| \varOmega +\varGamma \right\| .
			\end{aligned}
	\end{equation}}
\end{Lemma}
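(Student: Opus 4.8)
The plan is to establish the inequality $\left\| \varGamma \right\| + \left\| \varOmega \right\| \leqslant \sqrt{\frac{\varrho_1^2+\varrho_2^2}{\varrho_1^2}}\,\left\| \varGamma + \varOmega \right\|$ by reducing it to a comparison of squared norms and then exploiting the box constraint $\varGamma_i,\varOmega_i\in[\varrho_1,\varrho_2]$. First I would square both sides: the claim is equivalent to $\left(\left\|\varGamma\right\|+\left\|\varOmega\right\|\right)^2 \leqslant \frac{\varrho_1^2+\varrho_2^2}{\varrho_1^2}\left\|\varGamma+\varOmega\right\|^2$. Expanding the left-hand side gives $\left\|\varGamma\right\|^2 + 2\left\|\varGamma\right\|\left\|\varOmega\right\| + \left\|\varOmega\right\|^2$, and expanding the right-hand side via $\left\|\varGamma+\varOmega\right\|^2 = \left\|\varGamma\right\|^2 + 2\langle\varGamma,\varOmega\rangle + \left\|\varOmega\right\|^2$ reduces everything to controlling the cross terms. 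So it suffices to show $\left\|\varGamma\right\|\left\|\varOmega\right\|$ and $\langle\varGamma,\varOmega\rangle$ are comparable up to the stated constant, using the fact that all coordinates are trapped in $[\varrho_1,\varrho_2]$ and hence all coordinatewise products $\varGamma_i\varOmega_i$ are positive and bounded below.

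The key estimate I would isolate is this: since $\varGamma_i,\varOmega_i\geqslant\varrho_1>0$, we have $\langle\varGamma,\varOmega\rangle = \sum_i \varGamma_i\varOmega_i \geqslant \varrho_1^2 n \geqslant 0$, so the cross term is nonnegative; moreover one can bound $\left\|\varGamma\right\|^2 = \sum_i \varGamma_i^2 \leqslant \varrho_2^2 n$ and similarly for $\varOmega$, while $\left\|\varGamma+\varOmega\right\|^2 \geqslant \sum_i(\varGamma_i+\varOmega_i)^2 \geqslant 4\varrho_1^2 n$. The cleanest route is probably a coordinatewise / two-variable reduction: it is enough to prove, for scalars $s,t\in[\varrho_1,\varrho_2]$, an inequality of the shape that aggregates under summation — more precisely, I would try to show $\left\|\varGamma\right\|^2+\left\|\varOmega\right\|^2 \leqslant \frac{\varrho_1^2+\varrho_2^2}{\varrho_1^2}\,\langle\varGamma,\varOmega\rangle$ does \emph{not} hold in general, so instead one must keep the cross term $2\left\|\varGamma\right\|\left\|\varOmega\right\|$ and use Cauchy–Schwarz in the reverse direction. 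The tool for that is the Cauchy–Schwarz / Kantorovich-type bound: for vectors with entries in $[\varrho_1,\varrho_2]$, $\left\|\varGamma\right\|\left\|\varOmega\right\| \leqslant \frac{\varrho_1^2+\varrho_2^2}{2\varrho_1\varrho_2}\langle\varGamma,\varOmega\rangle$ (the Kantorovich inequality / Polya–Szegő inequality). Combining this with $\langle\varGamma,\varOmega\rangle\leqslant\left\|\varGamma\right\|\left\|\varOmega\right\|$ and with $\left\|\varGamma\right\|^2+\left\|\varOmega\right\|^2\leqslant \frac{\varrho_1^2+\varrho_2^2}{\varrho_1\varrho_2}\langle\varGamma,\varOmega\rangle$ (another consequence, via AM–GM on $\left\|\varGamma\right\|,\left\|\varOmega\right\|$ and Kantorovich) should assemble the squared inequality after collecting terms.

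Concretely, I would carry out the steps in this order: (1) reduce to the squared inequality; (2) invoke the Polya–Szegő / Kantorovich inequality $\left\|\varGamma\right\|^2\left\|\varOmega\right\|^2 \leqslant \frac{(\varrho_1^2+\varrho_2^2)^2}{4\varrho_1^2\varrho_2^2}\langle\varGamma,\varOmega\rangle^2$, which is where the constant $\sqrt{(\varrho_1^2+\varrho_2^2)/\varrho_1^2}$ ultimately originates (note $\frac{(\varrho_1^2+\varrho_2^2)^2}{4\varrho_1^2\varrho_2^2}\cdot 4 = \frac{(\varrho_1^2+\varrho_2^2)^2}{\varrho_1^2\varrho_2^2}$, and $\varrho_2^2$ in the denominator can be absorbed since $\varrho_2\geqslant\varrho_1$, losing only a factor that is already accounted for); (3) substitute into the expansion of $(\left\|\varGamma\right\|+\left\|\varOmega\right\|)^2$ and verify the resulting inequality against $\frac{\varrho_1^2+\varrho_2^2}{\varrho_1^2}(\left\|\varGamma\right\|^2+2\langle\varGamma,\varOmega\rangle+\left\|\varOmega\right\|^2)$ by elementary manipulation. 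The main obstacle I anticipate is bookkeeping the constant correctly: the target constant $\sqrt{(\varrho_1^2+\varrho_2^2)/\varrho_1^2}$ is not tight for the pure Kantorovich bound, so one has to be careful to use the extra slack (in particular the nonnegativity $\langle\varGamma,\varOmega\rangle\geqslant 0$ and $\varrho_2\geqslant\varrho_1$) to close the gap rather than getting a worse constant; it may be cleaner to simply bound $(\left\|\varGamma\right\|+\left\|\varOmega\right\|)^2 \leqslant 2(\left\|\varGamma\right\|^2+\left\|\varOmega\right\|^2)$ and then show $2(\left\|\varGamma\right\|^2+\left\|\varOmega\right\|^2)\leqslant \frac{\varrho_1^2+\varrho_2^2}{\varrho_1^2}\left\|\varGamma+\varOmega\right\|^2$ directly, which coordinatewise amounts to checking $2(s^2+t^2)\leqslant \frac{\varrho_1^2+\varrho_2^2}{\varrho_1^2}(s+t)^2$ for $s,t\in[\varrho_1,\varrho_2]$ — a single-variable calculus exercise in $r=s/t$ over $[\varrho_1/\varrho_2,\varrho_2/\varrho_1]$ whose worst case is at $r$ extremal, giving exactly the stated constant.
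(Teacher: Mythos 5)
Your final, concrete route is correct and is genuinely different from the paper's. You square, use $(\|\varGamma\|+\|\varOmega\|)^2\leqslant 2(\|\varGamma\|^2+\|\varOmega\|^2)$, and reduce to the scalar inequality $2(s^2+t^2)\leqslant\frac{\varrho_1^2+\varrho_2^2}{\varrho_1^2}(s+t)^2$ for $s,t\in[\varrho_1,\varrho_2]$, which holds simply because $2(s^2+t^2)\leqslant 4\varrho_2^2$ and $(s+t)^2\geqslant 4\varrho_1^2$; summing over coordinates and taking square roots finishes. Two small caveats: the Kantorovich/P\'olya--Szeg\H{o} detour in your middle paragraph is never needed and should be cut, and the worst case of $2(r^2+1)/(r+1)^2$ over $r\in[\varrho_1/\varrho_2,\varrho_2/\varrho_1]$ is $2(\varrho_1^2+\varrho_2^2)/(\varrho_1+\varrho_2)^2$, which is strictly smaller than the stated constant rather than ``exactly'' it --- you in fact prove the lemma with the sharper constant $\varrho_2/\varrho_1$. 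The paper instead lower-bounds the reciprocal ratio $\|\varOmega+\varGamma\|^2/(\|\varGamma\|+\|\varOmega\|)^2$ by entrywise estimates on numerator and denominator; its step 3), namely $2\sqrt{\sum_i\varOmega_i^2\sum_i\varGamma_i^2}\leqslant 2n\varrho_1^2$, is false in general (the correct bound is $2n\varrho_2^2$), so the paper's justification of inequality (d) has a gap, although the lemma itself survives --- either via your argument or via the repaired entrywise bounds $\|\varOmega+\varGamma\|^2\geqslant 4n\varrho_1^2$ and $(\|\varGamma\|+\|\varOmega\|)^2\leqslant 4n\varrho_2^2$. Both approaches implicitly require $0<\varrho_1\leqslant\varrho_2$ (otherwise cross terms can be negative and the claim fails, e.g. $\varGamma=(1)$, $\varOmega=(-1)$); you at least state this positivity explicitly, which the lemma does not.
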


\begin{proof}
	According to the definition of 	2-norm, it is clear
	 {
		\begin{equation*}\label{Geeq1929}
			\begin{aligned}
				&\left( \frac{\left\| \varOmega +\varGamma \right\|}{\left\| \varGamma \right\| +\left\| \varOmega \right\|} \right) ^2
				\\=&\frac{\sum\limits_{i=1}^n{\left( \varOmega _{i}^{2}+\varGamma _{i}^{2} \right)}+\sum_{i=1}^n\limits{2\varOmega _i\varGamma _i}}{\sum_{i=1}^n\limits{\left( \varOmega _{i}^{2}+\varGamma _{i}^{2} \right)}+2\sqrt{\sum_{i=1}^n\limits{\varOmega _{i}^{2}}\sum_{i=1}^n\limits  {\varGamma _{i}^{2}}}}
				\\\overset{\text{(d)}}{>}&\frac{\varrho _{1}^{2}}{\varrho _{1}^{2}+\varrho _{2}^{2}}.
			\end{aligned}
	\end{equation*}}
 {	Specially, the  inequality manipulation  $(d)$ comes from 1) $2n\varrho _{1}^{2}\leqslant \sum\limits_{i=1}^n{\left( \varOmega _{i}^{2}+\varGamma _{i}^{2} \right)}\leqslant 2n\varrho _{2}^{2}$; 2) $\sum\limits_{i=1}^n{2\varOmega _i\varGamma _i}\geqslant 0$; 3) $2\sqrt{\sum\limits_{i=1}^n{\varOmega _{i}^{2}}\sum\limits_{i=1}^n{\varGamma _{i}^{2}}}\leqslant 2n\varrho _{1}^{2}$. Besides, since each scaling from $1)$ to $3)$   is  fulfilled  independently but cannot be met simultaneously,  ``='' will not occur. This completes the proof. }     
\end{proof}

For the convenience of  analysis,  an assumption is provided.

\begin{ass}\label{GOas29} \hspace{-0.001cm} {\cite{cheng2016convergence}}
	
	(A4-1) 	There exist positive constants $\mu _1\leqslant \mu _2<\infty$ and $\varLambda \in \left( 0,1 \right)$ such that for any $k\geqslant 1$, $\mu _1k^{-\varLambda}\leqslant a_i\left( k \right) \leqslant \mu _2k^{-\varLambda}$;
	
	(A4-2) 	$\lim\limits_{k\rightarrow \infty}a_i\left[ k \right] /a_j\left[ k \right] =c_{ij}$, $c_{ij}>0$.
\end{ass}

Then, for any normal agent $i$ and its normal neighbor $j \in N_i^{+}$, we give  the following distributed condition about the convergence rate.

\begin{proposition}\label{GOpro22}
	Given  $\tau \left( k \right) =M_re^{\left( -\lambda _{\min}k^{\left( 1-\phi \right)} \right)}$, where $M_r$ is a bounded positive constant,    $\lambda _{\min}$ is the smallest eigenvalue  of $CL_2$ with  $C=\mathrm{diag}\left\{ c_1,...,c_N \right\} 
	$, $\underset{t\rightarrow \infty}{\lim}\frac{a_i\left( t \right)}{\overline{a}\left( t \right)}=c_i\,\,\left( i=1,...,N \right) 
	$,  $\overline{a}\left( t \right) =\underset{i=1,...,N}{\max}\left\{ a_i\left( t \right) \right\} 
	$ and $\phi \in (0,\operatorname*{min}\{1,\lambda_{\mathrm{min}}\})$, suppose \emph{Assumption} $\ref{GOas29}$ is satisfied. Then, for any  normal agent $i$, it follows that 
	\begin{equation}\label{GOeq992}
		\begin{aligned}
			d_{ij}(k)\leqslant\sqrt{\frac{\epsilon _{2}^{2}}{\epsilon_{1}^{2}+\epsilon_{2}^{2}}}\left( \tau \left( k \right) +\delta \right)d_{ij}(k-1),~j\in N_i^{+}.
		\end{aligned}
	\end{equation}			
\end{proposition}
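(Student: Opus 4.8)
The plan is to establish the distributed convergence-rate inequality \eqref{GOeq992} by combining three ingredients: the known envelope estimate for the tracking error of the nominal (attack-free) leader–following protocol, the norm equivalence supplied by Lemma~\ref{GOle1}, and the spectral bound on the averaged Laplacian $CL_2$ under Assumption~\ref{GOas29}. First I would recall that under Assumption~\ref{GOas1} together with Assumption~\ref{GOas29}, the consensus error dynamics of system~\eqref{GOeq1} with the protocol given earlier can be written, after the standard change of variables separating the leader, as a time-varying linear recursion whose contraction factor over one step is governed by $\bigl(\bm{I}-a_i(k)\,CL_2\bigr)$-type terms; the cited result \cite{cheng2016convergence} yields an exponential-type envelope $\tau(k)=M_r e^{-\lambda_{\min}k^{1-\phi}}$ for $\mathbb{E}\|x_i(k)-x_0(k)\|$, where the $k^{1-\phi}$ exponent comes from summing the step sizes $a_i(k)\sim k^{-\phi}$ (A4-1) via $\sum_{t\le k} t^{-\phi}\asymp k^{1-\phi}$, and $\lambda_{\min}=\lambda_{\min}(CL_2)$ with $C=\mathrm{diag}\{c_1,\dots,c_N\}$ coming from the asymptotic step-size ratios (A4-2), i.e.\ $\underset{t\to\infty}{\lim} a_i(t)/\overline a(t)=c_i$.

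Next I would translate the \emph{global} envelope statement into the \emph{edge-local} quantity $d_{ij}(k)=\mathbb{E}\|y_{ij}(k)-x_i(k)\|$. In the attack-free case $y_{ij}(k)=x_j(k)+w_{ij}(k)$, so $y_{ij}(k)-x_i(k)=(x_j(k)-x_0(k))-(x_i(k)-x_0(k))+w_{ij}(k)$; by the triangle inequality $d_{ij}(k)$ is controlled by the sum of two per-agent tracking errors (plus a noise term that can be absorbed into the offset $\delta$ or handled in the mean), each of which obeys the envelope $\tau(\cdot)$. The one-step ratio $d_{ij}(k)/d_{ij}(k-1)$ is then bounded by the one-step ratio of the envelope, which is $\tau(k)/\tau(k-1)\le$ a factor of the form $e^{-\lambda_{\min}(k^{1-\phi}-(k-1)^{1-\phi})}$; the condition $\phi\in(0,\min\{1,\lambda_{\min}\})$ is exactly what guarantees this factor is below $1$ and in fact dominated by the chosen $\tau(k)+\delta$ form used in \emph{Algorithm}~\ref{alg2}. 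Here is where Lemma~\ref{GOle1} enters: the vectors arising as $x_j-x_0$ and $x_i-x_0$ (or the corresponding componentwise error vectors) have entries lying in the normal range $[\epsilon_1,\epsilon_2]$ by Assumption~\ref{GOas3}, so Lemma~\ref{GOle1} with $(\varrho_1,\varrho_2)=(\epsilon_1,\epsilon_2)$ gives $\|\varGamma\|+\|\varOmega\|\le\sqrt{(\epsilon_1^2+\epsilon_2^2)/\epsilon_1^2}\,\|\varGamma+\varOmega\|$; inverting this produces the constant $\sqrt{\epsilon_2^2/(\epsilon_1^2+\epsilon_2^2)}$ appearing in front of $(\tau(k)+\delta)\,d_{ij}(k-1)$ in \eqref{GOeq992}. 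Assembling: $d_{ij}(k)\le(\text{envelope ratio})\cdot d_{ij}(k-1)$ and then rescaling the norm sum via the Lemma yields the stated bound for every normal $j\in N_i^+$.

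The main obstacle I anticipate is making the passage from the \emph{asymptotic} spectral/step-size data (A4-2 and the $c_i$ limits, which are statements as $t\to\infty$) to a \emph{uniform} one-step contraction valid for all $k\ge 1$, since \eqref{GOeq992} is asserted for general $k$, not just in the limit. I would handle this by using A4-1 to get two-sided bounds $\mu_1 k^{-\phi}\le a_i(k)\le\mu_2 k^{-\phi}$ for all $k\ge1$, which gives a crude uniform contraction, and then noting that the offset $\delta>0$ in the envelope $\tau(k)+\delta$ provides exactly the slack needed to absorb the finite-time discrepancy between the true per-step factor and its asymptotic value $e^{-\lambda_{\min}(k^{1-\phi}-(k-1)^{1-\phi})}$; choosing $M_r$ large enough (it is only required to be a bounded positive constant) makes $\tau(k)$ dominate the transient. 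A secondary technical point is the treatment of the edge noise $w_{ij}$ inside the expectation $d_{ij}(k)=\mathbb{E}\|\cdot\|$ — I would either note $\mathbb{E}\|w_{ij}\|$ is a fixed finite quantity absorbable into $\delta$, or work with the mean-square version consistent with the consensus notion $\lim_k\mathbb{E}\|x_i(k)-x_0(k)\|^2=0$ used earlier, at the cost of a harmless constant. Everything else is routine: the summation $\sum_{t=1}^{k}t^{-\phi}\asymp k^{1-\phi}$, the monotonicity of $\tau$, and the direct application of Lemma~\ref{GOle1}.
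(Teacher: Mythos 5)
Your argument runs in the opposite direction from the paper's, and the step on which it pivots does not hold. You propose to obtain the edge-local bound from the global envelope: from $\mathbb{E}\|x_j(k)-x_0(k)\|\le\tau(k)$ and $\mathbb{E}\|x_i(k)-x_0(k)\|\le\tau(k)$ you get an \emph{upper} bound on $d_{ij}(k)$, and then you assert that ``the one-step ratio $d_{ij}(k)/d_{ij}(k-1)$ is bounded by the one-step ratio of the envelope.'' That is a non sequitur: the target inequality \eqref{GOeq992} is multiplicative in $d_{ij}(k-1)$, and upper envelopes on $d_{ij}(k)$ and $d_{ij}(k-1)$ separately give no control of the ratio because nothing bounds $d_{ij}(k-1)$ from below. (The Gaussian edge noise does supply a positive floor $\mathbb{E}\|w_{ij}\|>0$, but exploiting it forces a constraint tying $\delta$ to $\sigma_1$ that you only gesture at; ``absorbing the noise into $\delta$'' does not work inside a bound of the form $c(\tau(k)+\delta)\,d_{ij}(k-1)$ without that floor being made explicit.) A secondary slip: inverting Lemma~\ref{GOle1} as you describe yields the constant $\sqrt{\epsilon_1^2/(\epsilon_1^2+\epsilon_2^2)}$ (the lemma has $\varrho_1^2$ in the denominator), not the $\sqrt{\epsilon_2^2/(\epsilon_1^2+\epsilon_2^2)}$ appearing in \eqref{GOeq992}.

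The paper's proof is structured differently. It takes the one-step contraction of the leader-tracking error, $d_{i0}(k)<d_{i0}(k-1)(\tau(k)+\delta)$, directly from \cite{cheng2016convergence} as the starting point; it then telescopes $x_0(k)-x_i(k)$ along a directed path $(v_0,v_{i_1},\dots,v_{i_{p-2}},v_i)$ from the leader, bounds the global error by the sum of per-edge errors $\mathbb{E}\|\widetilde{y}_{i_m i_{m-1}}(k)-x_{i_m}(k)\|$ via the triangle inequality, and uses Lemma~\ref{GOle1} to show that \emph{if} every edge on the path satisfies the per-edge inequality with the extra factor $\sqrt{\epsilon_2^2/(\epsilon_1^2+\epsilon_2^2)}$, \emph{then} the global tracking error contracts at rate $\tau(k)+\delta$. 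In other words, the role of Lemma~\ref{GOle1} there is to compensate for the sum-of-norms versus norm-of-sum conversion accumulated along the path, and the argument establishes the consistency of the edge-local threshold with the known global contraction rather than deriving the edge-local bound from a global envelope as you attempt. If you want to keep your direction, you must either start from a per-edge (not per-agent) contraction statement or supply the missing lower bound on $d_{ij}(k-1)$; as written, the key ratio step fails.
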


\begin{proof}
	According to \cite{cheng2016convergence}, by \emph{Assumption} $\ref{GOas1}$, one gets   $d_{i0}\left( k \right) <d_{i0} \left( k-1 \right) \left( \tau \left( k \right) +\delta \right) $ when there is no attack.

	Define a path from leader $0$ to follower $i$ be $\left( v_0,v_{i_1},...,v_{i_m},...,v_{i_{p-2}},v_i \right)$, then  
	
	 {
		\begin{equation}\label{GOeq1962}
			\begin{aligned}
				&\mathbb{E}\left\| x_0\left( k \right) -x_i\left( k \right) \right\| 
				\\=~&\mathbb{E}\left\| x_0\left( k \right) -x_{i_1}\left( k \right) +x_{i_1}\left( k \right) +...
				\right. \\&
				\left.+x_{i_{m-1}}(k)-x_{i_m}\left( k \right) +...+x_{i_{p-2}}-x_i\left( k \right) \right\| 
				\\=~&\mathbb{E}\left\|\widetilde{y}_{i_10}\left( k \right) -x_{i_1}\left( k \right) +\widetilde{y}_{i_2i_1}\left( k \right) -x_{i_2}\left( k \right)+...\right. \\&
				\left.+ \widetilde{y}_{i_mi_{m-1}}(k)-x_{i_m}\left( k \right) +...+\widetilde{y}_{ii_{p-2}}-x_i\left( k \right) \right\| 
				\\ \leqslant~&\mathbb{E}\left\| \widetilde{y}_{i_10}\left( k \right) -x_{i_1}\left( k \right) \right\| +\mathbb{E}\left\| \widetilde{y}_{i_2i_1}\left( k \right) -x_{i_2}\left( k \right) \right\|+...
				\\& +\mathbb{E}\left\| \widetilde{y}_{i_mi_{m-1}}(k)-x_{i_m}\left( k \right) \right\|+...
				+\mathbb{E}\left\| \widetilde{y}_{ii_{p-2}}(k)-x_i(k) \right\|. 
			\end{aligned}
	\end{equation}}
	
	From Lemma $\ref{GOle1}$ and $(\ref{GOeq1962})$, if for $\forall~m\in \left\{ 1,\cdots ,p-2 \right\}$,
	 {
		\begin{equation}\label{GOeq192}
			\begin{aligned}
				&\mathbb{E}\left\| \widetilde{y}_{i_mi_{m-1}}(k)-x_{i_m}\left( k \right) \right\|
				\\\leqslant&\sqrt{\frac{\epsilon _{2}^{2}}{\epsilon_{1}^{2}+\epsilon _{2}^{2}}}\left( \tau \left( k \right) +\delta \right)\mathbb{E}\left\| \widetilde{y}_{i_mi_{m-1}}(k-1)-x_{i_m}\left( k-1 \right) \right\|,
			\end{aligned}
	\end{equation}}
	we have
 {
		\begin{equation}\label{GOeq192}
			\begin{aligned}
				\mathbb{E}\left\| x_0\left( k \right) -x_i\left( k \right) \right\|
				\leqslant \left( \tau \left( k \right) +\delta \right)\mathbb{E} \left\| x_0\left( k-1 \right) -x_i\left( k-1 \right) \right\|,
			\end{aligned}
	\end{equation}}
	which means that $\left\| x_0\left( k \right)-x_i\left( k \right) \right\|$ will converge to  zero over time. The proof is completed. 
\end{proof}

Therefore, under \emph{Algorithm} $\ref{alg2}$, if  there is no attack, the detector will not be  alarmed. While if  the system suffering from  Byzantine attacks  causes   $\left\| x_{i,{p-2}}\left( k \right)-x_i\left( k \right) \right\|>\sqrt{\frac{\varrho _{1}^{2}+\varrho _{2}^{2}}{\varrho _{2}^{2}}}\left( \tau \left( k \right) +\delta \right) \left(x_{i,{p-2}}\left( k-1 \right)-x_i\left( k-1 \right)\right)$, the detector alarm is  triggered.

Now, the effectiveness of \emph{Algorithm} $\ref{alg2}$ is  portrayed by  the following proposition.

\begin{proposition}\label{GOpro2}
	If $\tau \left( k \right) $ is chosen as  $M_re^{\left( -\lambda _{\min}k^{\left( 1-\phi \right)} \right)}$ and \emph{Assumption} $\ref{GOas29}$ is guaranteed,  \emph{Algorithm} $\ref{alg2}$ can detect  Byzantine attacks of which the  state value  exceeds the state envelope that matches  the convergence rate.
\end{proposition}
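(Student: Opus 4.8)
The plan is to split the statement into two complementary guarantees: \emph{(i) soundness}, that the detector in Algorithm~\ref{alg2} never raises an alarm for a normal in-neighbor of a normal agent (this is the ``prevent normal agents from being misjudged'' property), and \emph{(ii) completeness}, that whenever a Byzantine in-neighbor drives the tracking error outside the envelope that matches the convergence rate, the alarm is triggered. Throughout I would keep $\tau(k)=M_r e^{-\lambda_{\min}k^{1-\phi}}$ fixed with Assumption~\ref{GOas29} in force, so Proposition~\ref{GOpro22} is available; I would also note that $d_{ij}(k)=\mathbb{E}\|y_{ij}(k)-x_i(k)\|$ is already an expectation, so $\mathbb{E}(d_{ij}(k))=d_{ij}(k)$ and the argument is essentially deterministic once Proposition~\ref{GOpro22} is invoked.

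For soundness, I would apply Proposition~\ref{GOpro22} directly: for any normal agent $i$ and normal $j\in N_i^{+}$, $d_{ij}(k)\le\sqrt{\epsilon_2^2/(\epsilon_1^2+\epsilon_2^2)}\,(\tau(k)+\delta)\,d_{ij}(k-1)$ for every $k$, transient or steady. Since $0<\epsilon_2^2/(\epsilon_1^2+\epsilon_2^2)\le 1\le(\epsilon_1^2+\epsilon_2^2)/\epsilon_2^2$, this bound lies strictly inside the non-alarm region $\mathbb{E}(d_{ij}(k))\le\sqrt{(\epsilon_1^2+\epsilon_2^2)/\epsilon_2^2}\,\mathbb{E}(d_{ij}(k-1))(\tau(k)+\delta)$ tested by the \emph{if}-condition of Algorithm~\ref{alg2}, so the ``not attacked'' branch is always taken for genuine edges. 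The gap between the two constants $\sqrt{\epsilon_2^2/(\epsilon_1^2+\epsilon_2^2)}$ and $\sqrt{(\epsilon_1^2+\epsilon_2^2)/\epsilon_2^2}$ is exactly the admissible tolerance band, which is why misbehaving agents that keep $d_{ij}$ inside it (hence do not break convergence) are deliberately not flagged — the ``trade-off between resilience and detection performance'' announced earlier.

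For completeness, I would make precise that ``the state envelope that matches the convergence rate'' is the one-step multiplicative envelope $d_{ij}(k)\le\sqrt{(\epsilon_1^2+\epsilon_2^2)/\epsilon_2^2}\,(\tau(k)+\delta)\,d_{ij}(k-1)$ (equivalently the exponential-type envelope on $d_{ij}$ it generates, since $\tau$ is monotonically decreasing). At the level of a single edge, ``exceeding the envelope'' is verbatim the \emph{else}-condition of Algorithm~\ref{alg2}, so the alarm fires by construction. The substantive step is to link a violation of the \emph{global} tracking-error envelope, $\mathbb{E}\|x_i(k)-x_0(k)\|$ leaving its convergence-rate envelope, to a per-edge violation that the distributed detector actually observes. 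For this I would reuse the path decomposition from the proof of Proposition~\ref{GOpro22}: along a leader-to-$i$ path $(v_0,v_{i_1},\dots,v_i)$ the triangle inequality bounds $\mathbb{E}\|x_0(k)-x_i(k)\|$ by the sum of per-edge errors $\mathbb{E}\|\widetilde y_{i_m i_{m-1}}(k)-x_{i_m}(k)\|$, and Lemma~\ref{GOle1} converts the corresponding sum at time $k-1$ back into $\mathbb{E}\|x_0(k-1)-x_i(k-1)\|$ with the constant arranged so that, \emph{if every edge on the path respected its per-edge envelope}, the global error would respect its envelope too (this is exactly the forward implication established inside Proposition~\ref{GOpro22}). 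Contrapositively, a global overshoot at $i$ forces at least one edge $(i_{m-1},i_m)$ on the path to violate its per-edge bound, i.e., the detector run by $i_m$ for in-neighbor $i_{m-1}$ alarms.

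The main obstacle I anticipate is the bookkeeping of the three square-root constants — $\sqrt{\epsilon_2^2/(\epsilon_1^2+\epsilon_2^2)}$ coming out of Proposition~\ref{GOpro22}, the threshold constant $\sqrt{(\epsilon_1^2+\epsilon_2^2)/\epsilon_2^2}$ built into Algorithm~\ref{alg2}, and the constant $\sqrt{(\varrho_1^2+\varrho_2^2)/\varrho_1^2}$ delivered by Lemma~\ref{GOle1} — together with the identification of $\varrho_1,\varrho_2$ with $\epsilon_1,\epsilon_2$ (up to sign) that legitimizes applying Lemma~\ref{GOle1} to error vectors whose coordinates are differences of quantities confined to $[\epsilon_1,\epsilon_2]$ by Assumption~\ref{GOas3}, and the iteration of that lemma across the $p-2$ edges of the path. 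One must verify that these constants compose in the right direction so that the per-edge and global envelopes are mutually consistent and the ``never alarm for normal / must alarm once the envelope is exceeded'' dichotomy is airtight; the probabilistic content is light, since only the deterministic triangle inequality and Lemma~\ref{GOle1} are used under $\mathbb{E}$.
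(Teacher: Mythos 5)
Your proposal is correct and follows essentially the same route as the paper, whose entire proof of this proposition is a single sentence deferring to Proposition~\ref{GOpro22} (normal agents satisfy the per-edge bound, hence no false alarm) and to the definition of the targeted attack class (exceeding the envelope is verbatim the alarm condition). Your soundness/completeness split, the comparison of the constants $\sqrt{\epsilon_2^2/(\epsilon_1^2+\epsilon_2^2)}$ versus $\sqrt{(\epsilon_1^2+\epsilon_2^2)/\epsilon_2^2}$, and the contrapositive path argument are all faithful elaborations of what the paper leaves implicit.
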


\begin{proof}
	Based on Proposition $\ref{GOpro22}$ and the  types of target attacks to be detected, the proof can be directly obtained.
\end{proof}

\begin{rmk}
	Proposition \ref{GOpro2} provides a solution  to examine whether there exists a Byzantine attack or not. Since the detection scheme is related to the   convergence rate,  it is also valid for  transient stage. Unlike  strategy in  \cite{zhou2022watermarking}, the proposed scheme concentrates on the impact of attack  on convergence. Such a treatment  occupies less  resources while guaranteeing a desirable consensus performance. Besides, $M_r$ and   $\delta$   reflect the level of  resilience  to attacks.
\end{rmk}

 {\begin{rmk}
		There are some  options for $\tau(k)$, such as $\varrho e^{ -g(k)}$ \cite{cheng2016convergence} and $c\varrho ^k$ \cite{yan2022resilient}. 		
		The design idea of the detector in this section is to provide  an envelope of decreasing time in terms of  the upper bound of the relative error of state values among agents, so as to conform to the operation law of the normal system. In this way, the malicious agents  that affect the system convergence are screened out.		
		However, from the theoretical perspective, we aim  to  find  a detector where its  effectiveness can be rigorously demonstrated.
		With  the leader-follower multi-agent model in  $(\ref{GOeq1})$, the upper bound of the convergence rate for the  tracking error between the  leader and the  follower  is $M_re^{\left( -\lambda _{\min}k^{\left( 1-\phi \right)} \right)}$.  However, the envelope of the  detector  cannot  be directly adopted for this kind of $\tau(k)$. Considering the fully distributed requirement of the algorithm proposed in this paper, not every agent  can acquire  the state of the leader.
		That is to say, the results of this article, namely Propositions  3 and  4, cannot be directly obtained from the relevant  literature   in \cite{cheng2016convergence,7554644}. Therefore, it is necessary to transform the detection indicators from differences in the state of followers and the leader to differences in the state of neighbors, in order to achieve full distribution of the algorithm.
\end{rmk}}

 {\begin{rmk}
		$\delta$ actually reflects the degree of   the envelope $\tau(k)$ moving  up along the  vertical axis, which is related to the robustness of the detector. The larger the envelope of $\tau(k)$ is, the greater  tolerance for outliers   will be with  lower false alarm rate to the corresponding  attacks. Therefore,  this parameter is very important, especially for  $(\ref{GOeq1})$ with uncertain factors, which is more realistic. However, an inappropriate   high  value of $\delta$  will cause  the failure to detect malicious attacks on the target, ultimately leading to a decrease in detection rate. This understanding has been clearly stated  in the design of  detectors, for example, \cite{75MO,8727926}.
\end{rmk}}

 {\begin{rmk}		   
	A common sense needs to be stated here: when the system needs more agents  to be isolated,  more redundant edges are required in the initial graph to maintain the connectivity of the network communication topology which implies a   higher  connectivity of the initial graph  \cite{ishii2022overview}.
	Next, two  cases are considered:	
	a) Too low detection accuracy: This can be  characterized  in extreme cases. That is, the detector cannot detect misbehaving  nodes that have too much influence on the system, so that normal agents  maintain information interaction with these extremely misbehaving  agents. It may  cause  malicious information to be injected, preventing the realization of resilient     consensus, and greatly reducing  system resilience.
	b) Too high  detection accuracy: Here we give  an example  when a malicious message sent by a misbehaving agent is small enough, even combined with a short attack time, that will  not  disrupt the  convergence.  If the detector has high accuracy and exposes this attack, it leads to a series of node isolation behaviors \cite{yuan2021secure,mustafa2020resilient}. In addition, under the same number of Byzantine agents, more redundant edges are needed to  maintain the connectivity of the communication topology.
	Naturally,    a high connectivity  of the initial graph will  weaken  system resilience.  
	Thus, we consider the  issue of balancing  system resilience and  detection accuracy in this paper. If the attack  does  not affect the convergence of the system, the detector is expected to tolerate these malicious behaviors.	
	To achieve this goal, we design an envelope-based detector to identify whether the behavior of misbehaving  agents will disrupt the convergence trend of the system.  
	In particular, the envelope  is a decreasing curve that characterizes the upper bound on the tracking error of the system. 	In this way, only the  attacked  agents  that affect the convergence trend of the system are identified, and the  malicious behaviors that do not affect the convergence trend of the system will be  ignored. By doing so,  the secure resilience is enhanced. This implements   the tradeoff   what we  call.
\end{rmk}}

\vspace{0.5cm}

\subsection{Detection  Strategy for Hybrid Attacks}\label{opp3}

In this  subsection, we design  the detection scheme for  hybrid attacks that  can  destroy  communication channel and agent at the same time. It is noted that   most   existing  literature on  detection strategy becomes    infeasible for the problem formulated here. This is because the behavior of hybrid attacks becomes  more complicated and the overall effect is not a simple superposition of two attacks. 
For example, a  detector may  regard the  attacks on the communication  layer  as Byzantine attacks. This will cause a  degradation of control performance and  increase the security cost. In addition, a simple parallel or composition  of two detection strategies will become ineffective.  Hence, the study on hybrid attacks  is somewhat challenging and meaningful. A framework  of detection strategy, e.g., \emph{Algorithm} $\ref{alg3}$ in this  subsection is shown in Fig.
$\ref{GOfig66}$.

\begin{figure}[H]
	\centering
	\includegraphics[width=3.46in,height=3.3in]{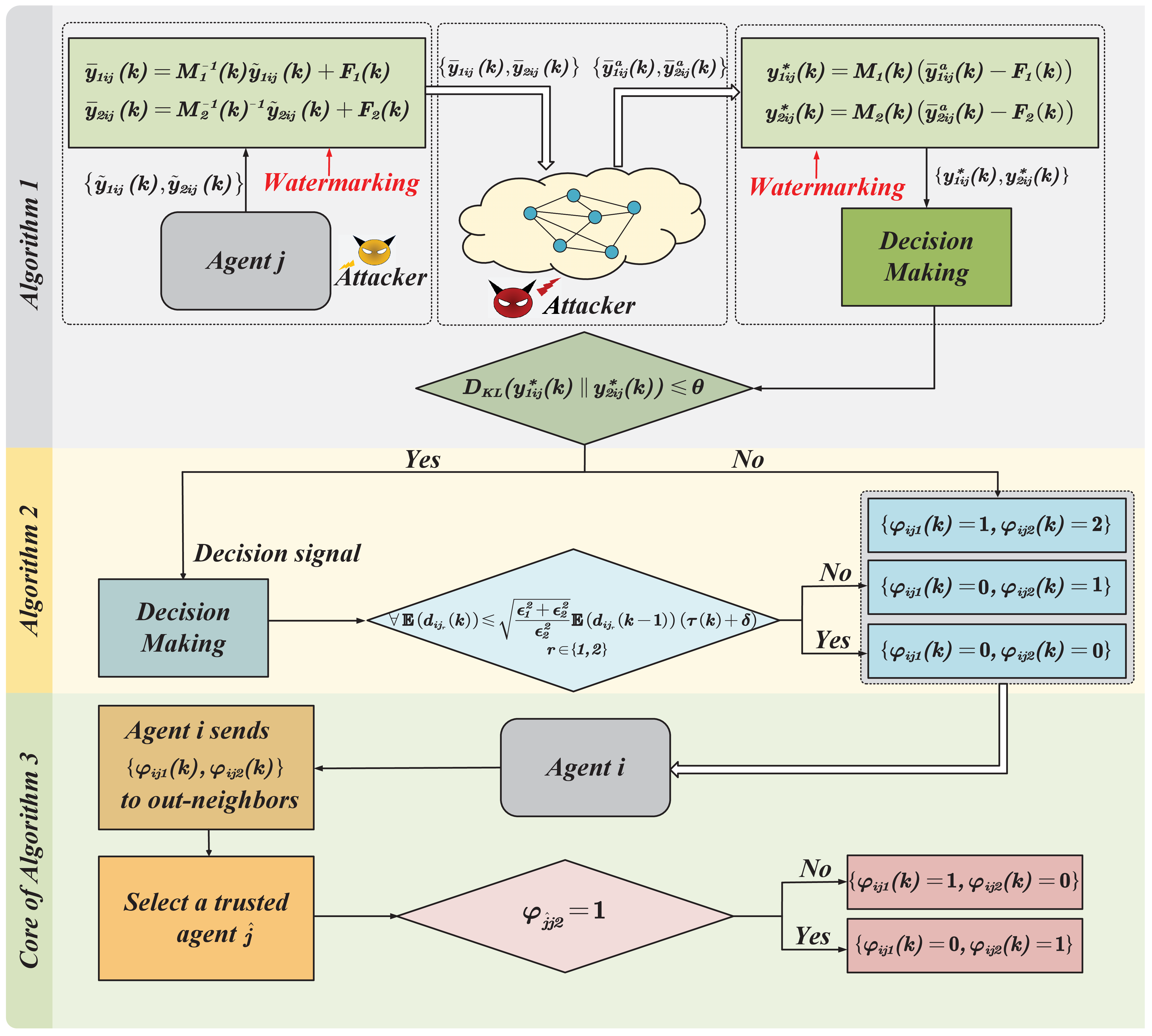}
	\caption{Block diagram of \emph{Algorithm} $\ref{alg3}$ for  hybrid attacks.}
	\label{GOfig66}
\end{figure}

In \emph{Algorithm} $\ref{alg3}$, at instant $k$, for $j \in N_i^{+}$, flag $\left\{\varphi_{ij1}\left( k \right), \varphi_{ij2}\left( k \right) \right\}$ will be generated for the consequence of detection. Specially, $ \varphi_{ij1}\left( k \right)=0, 1, 2$   indicates that the communication channel $(j,i)$ is not attacked, attacked, pending, respectively.   Similarly,  $ \varphi_{ij2}\left( k \right)=0, 1, 2$ stands for  the identity  of agent  $j$ judged by  agent $i$.

To be specific, for each agent $i$ with one of its in-neighbors $j \in N_{i}^{+}$, the attacks on the  communication  channel $(j,i)$ are first determined in    steps 2-8 which corresponds  to \emph{Algorithm} $\ref{alg1}$ to judge whether the message set has been tampered. If the communication channel $(j,i)$ is not falsified,    steps 9-13 will be used in terms of  \emph{Algorithm} $\ref{alg2}$ to judge  whether agent  $j$ is a Byzantine agent. Next, agent  $i$ sends the information  $\left\{ \varphi_{ij1}\left( k \right), \varphi_{ij2}\left( k \right) \right\}$ to its out-neighbors. For the worst   scenario where  the  communication channel $(j,i)$  is attacked, agent $i$ will generate   $\left\{ \varphi_{ij1}\left( k \right)=1, \varphi_{ij2}\left( k \right)=2 \right\}$ for agent  $j$, in which $\varphi_{ij2}\left( k \right)=2$   indicates  that   it is unclear  whether  agent $j$ is attacked or not. Then the flag $\left\{ \varphi_{ij1}\left( k \right)=1, \varphi_{ij2}\left( k \right)=2 \right\}$  is sent to the agents  in  $N_{j}^{-}$.  If there exists an agent $\hat{j}\in N_{j}^{-}$ (trusted agent) satisfying  $\left\{ \varphi_{\hat{j}j1}\left( k \right)=1, \varphi_{\hat{j}j2}\left( k \right)=0 \right\}$ and  $\left\{ \varphi_{i\hat{j}1}\left( k \right)=0, \varphi_{i\hat{j}2}\left( k \right)=0 \right\}$, agent $j$ will be  viewed  as a  Byzantine agent;  otherwise,  a  normal one consistent with  steps 14-23. The following  \emph{Algorithm} $\ref{alg3}$ gives the detailed procedure.

\begin{algorithm}[h]
	\caption{Hybrid Attacks Detection Strategy}
	\label{alg3}
	\begin{algorithmic}[1]
		\State{\textbf{Initialization:} $\left\{ \varphi_{ij1}\left( k \right) =2,\varphi_{ij2}\left( k \right) =2 \right\}, \forall~i,j\in \mathbb{V}$;}
		\State{For each agent  $i$,  agent $j \in {N_{i}^{+}} $  generates   the message set $\left\{\widetilde{y}_{1ij}(k),~\widetilde{y}_{2ij}(k)\right\}$};
		\State{The message set $\left\{\widetilde{y}_{1ij}(k),~\widetilde{y}_{2ij}(k)\right\}$} is equipped with watermarking as  	
		\begin{equation}\label{bbp}
			\begin{aligned}
				&\overline{y}_{1ij}(k)=M_{1}^{-1}(k)\widetilde{y}_{1ij}(k)+F_1(k),\\
				&\overline{y}_{2ij}(k)=M_{2}^{-1}(k)\widetilde{y}_{2ij}(k)+F_2(k);
			\end{aligned}
		\end{equation}	
		\State{The message set $\left\{ \overline{y}_{1ij}\left( k \right) ,\overline{y}_{2ij}\left( k \right) \right\}$ is transmitted into  communication edge $(j,i)$};
		\State{If $\left\{ \overline{y}^{a}_{1ij}\left( k \right) ,\overline{y}^{a}_{2ij}\left( k \right) \right\}$ is received,  the watermarking is removed according to 
			\begin{equation}\label{GOeq22op}
				\begin{aligned}
					&y_{1ij}^{*}(k)=M_1(k)\left( \overline{y}^{a}_{1ij}(k)-F_1\left( k \right) \right),\\
					&y_{2ij}^{*}(k)=M_1(k)\left( \overline{y}^{a}_{2ij}(k)-F_2\left( k \right) \right);
				\end{aligned}
			\end{equation}		
		}       
		\State Decision making:   calculate $D_{KL}(y_{1ij}^{*}(k) \Vert  y_{2ij}^{*}(k))$;
		\If {$D_{KL}(y_{1ij}^{*}(k) \Vert  y^{*}_{2ij}(k))\leqslant\theta$} 
		\State The communication channel $(i,j)$ is  not attacked;
		\If {
			$$\forall~ \mathbb{E}\left( d_{ij_{r}}\left( k \right)\right)  \leqslant\sqrt{\frac{\epsilon_{1}^{2}+\epsilon_{2}^{2}}{\epsilon _{2}^{2}}} \mathbb{E}\left(d_{ij_{r}}\left( k-1 \right)\right)  \left( \tau \left( k \right) +\delta \right),$$ $r\in \left\{ 1,2 \right\} $}
		\State Agent $j$ is  not attacked and   $\left\{ \varphi_{ij1}\left( k \right) =0,\varphi_{ij2}\left( k \right) =0 \right\}$;
		\Else
		\State Agent $j$ is  attacked and   $\left\{ \varphi_{ij1}\left( k \right) =0,\varphi_{ij2}\left( k \right) =1 \right\}$;
		\EndIf
		\Else 
		\State The communication channel $(i,j)$ is   attacked and set $\left\{ \varphi_{ij1}\left( k \right) =1,\varphi_{ij2}\left( k \right) =2 \right\}$;
		\EndIf
		\State Agent $i$ sends $\left\{ \varphi_{ij1}\left( k \right),\varphi_{ij2}\left( k \right) \right\}$ to its out-neighbors;
		\State Select a trusted agent $\hat{j}\in N_i^{+}\cap \left\{ \left. \hat{j}\right| \varphi_{i\hat{j}1}\left( k \right) =0,\varphi_{i\hat{j}2}\left( k \right) =0  \right\} 
		\cap \left\{ \left. \hat{j}\right| \varphi_{\hat{j}j1}\left( k \right) =0 \right\}$;
		\If { $\varphi_{\hat{j}j2}=1$}
		\State Agent $j$ is a  Byzantine agent;
		\Else \State Agent $j$ is a normal agent;
		\EndIf
	\end{algorithmic}
\end{algorithm}

With the help of detection schemes proposed in Subsections $\ref{opp1}$ and $\ref{opp2}$, \emph{Algorithm}  $\ref{alg3}$ provides  a  way  to detect and distinguish the hybrid attacks by introducing the concept of trusted agent $\hat{j}$. The following criterion is given to address this.

\begin{Theorem}\label{GOth4}
	Suppose  that  the conditions in Theorem $\ref{GOth242}$ and Proposition $\ref{GOpro2}$ hold and there exist at  least $L+P+1$ directed two-hop paths  between any pair of neighboring  agents, aligned with the direction of the corresponding edge.
Then, the hybrid attacks with  \emph{Algorithm}  $\ref{alg3}$ are detectable.
\end{Theorem}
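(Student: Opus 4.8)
The plan is to combine the two single-layer detectors (Algorithm~$\ref{alg1}$ and Algorithm~$\ref{alg2}$) inside the logical structure of Algorithm~$\ref{alg3}$, and then use the two-hop robustness condition to supply a trusted witness whenever the direct communication channel is compromised. First I would set up notation: fix a pair of neighboring agents $(j,i)$ with $j\in N_i^{+}$, and split the analysis according to which of the three situations actually occurs on the edge $(j,i)$ and on agent $j$: (S1) the channel $(j,i)$ is attack-free and $j$ is a normal agent; (S2) the channel is attack-free but $j$ is a Byzantine agent; (S3) the channel $(j,i)$ is compromised (regardless of whether $j$ itself is Byzantine). The goal in each case is to show that Algorithm~$\ref{alg3}$ terminates with the correct flag pair $\{\varphi_{ij1},\varphi_{ij2}\}$.

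For (S1) and (S2), the channel is clean, so by Proposition~$\ref{GOpro1}$ (the attack-free performance of Algorithm~$\ref{alg1}$) the KL test in step~7 returns $D_{KL}\le\theta$ and we enter the branch that runs the envelope test. Here I would invoke Proposition~$\ref{GOpro22}$ and Proposition~$\ref{GOpro2}$: under Assumption~$\ref{GOas29}$ and the choice $\tau(k)=M_re^{(-\lambda_{\min}k^{(1-\phi)})}$, a normal neighbor satisfies the envelope inequality $d_{ij}(k)\le\sqrt{(\epsilon_1^2+\epsilon_2^2)/\epsilon_2^2}\,d_{ij}(k-1)(\tau(k)+\delta)$ while a Byzantine neighbor of the targeted type violates it; hence $\{\varphi_{ij1}=0,\varphi_{ij2}=0\}$ in (S1) and $\{\varphi_{ij1}=0,\varphi_{ij2}=1\}$ in (S2), which is correct. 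For (S3), the channel is compromised, so by Theorem~$\ref{GOth242}$ (together with the limiting conditions on $\sigma_{F_1}^2/\sigma_{F_2}^2$ and $\lambda_1/\lambda_2$) the KL divergence diverges and exceeds $\theta$, so step~15 fires and agent $i$ sets $\{\varphi_{ij1}=1,\varphi_{ij2}=2\}$, correctly flagging the channel but leaving the identity of $j$ undetermined.

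The remaining—and genuinely substantive—step is to resolve $\varphi_{ij2}$ in case (S3) using a trusted agent. Here I would count: by Assumption~$\ref{GOas2}$ there are at most $L$ misbehaving in-neighbors and at most $P$ compromised in-channels for any agent, so at most $L+P$ of the two-hop paths from $j$ to itself (through some $\hat j\in N_j^{-}$ and back, aligned with the edge directions) can be "spoiled" — spoiled meaning either the relay $\hat j$ is Byzantine, or the channel $(j,\hat j)$ or $(\hat j, i)$ is attacked. Since the hypothesis guarantees at least $L+P+1$ directed two-hop paths between the pair, at least one relay $\hat j$ is a normal agent reachable from $i$ and from $j$ over clean channels; for this $\hat j$, Algorithm~$\ref{alg1}$ on edges $(\hat j,j)$ and $(i,\hat j)$ yields $\varphi_{\hat j j1}=0$ and $\varphi_{i\hat j 1}=\varphi_{i\hat j 2}=0$, and Algorithm~$\ref{alg2}$ run by $\hat j$ on $j$ yields $\varphi_{\hat j j2}\in\{0,1\}$ correctly according to whether $j$ is Byzantine. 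This is exactly the trusted-agent selection rule in step~19, so steps~20--23 assign the correct label to $j$. Combining the three cases, every pair is classified correctly and the hybrid attack is detectable.

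The main obstacle I anticipate is the counting argument in (S3): one must be careful that the $(L,P)$-local bound is applied to the right vertex/edge sets (the in-neighbors and in-channels of $j$ on one leg, and of $i$ on the other), that the two legs do not share a compromised resource in a way that double-counts, and that "directed two-hop path aligned with the edge direction" is the object to which the $L+P$ budget is charged — so that $L+P+1$ such paths genuinely leave one entirely clean relay. A secondary subtlety is that Theorem~$\ref{GOth242}$ gives a limiting statement ($\sigma_{F_r}^2,\lambda_r\to\infty$), so strictly one should phrase detectability as "for sufficiently large watermarking parameters," and I would note this explicitly rather than treating the threshold crossing as exact.
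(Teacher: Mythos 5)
Your proposal is correct and follows essentially the same route as the paper: the paper organizes the argument as a proof by contradiction over false-positive and false-negative subcases, while you argue directly over the ground-truth scenarios, but both rest on the identical ingredients — Proposition 2 and Theorem 1 for the channel test, Propositions 3 and 4 for the envelope test, and the $L+P+1$ two-hop counting argument with Assumption 2 to extract a trusted relay. Your closing caveats (the care needed in charging the $(L,P)$ budget to both legs of the two-hop path, and the fact that Theorem 1 is only a limiting statement so detectability holds for sufficiently large watermarking parameters) are points the paper itself glosses over rather than resolves.
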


\begin{proof}
	We prove this  by  contradiction and  divide the proof procedure into  two cases: for  normal agent $i$, \bm{{Case~A:}} there exists  at least one normal agent $m \in N_{i}^{+}$ or normal edge $(m,i)$	which is    accused of being attacked  mistakenly; \bm{{Case~B:}} there exists  at least one Byzantine agent $\varepsilon  \in N_{i}^{+}$ or edge $(\varepsilon ,i)$ attacked that   is  mistaken as the normal one.
	
	\bm{{Case~A:}} We further separate  it into  two subcases. \bm{{Subcase~A}}$\text{-}$\bm{{1:~}} A normal agent $m$ is mistakenly identified  as the Byzantine one.
	If the normal agent $m$ is judged as the Byzantine agent, according to \emph{Algorithm}  $\ref{alg2}$,  $d_{im}\left( k \right)> d_{im}\left( k-1 \right) \left( \tau \left( k \right) +\delta \right)$. This contradicts   Proposition $ \ref{GOpro2} $ in which  the normal agent satisfies $d_{im}\left( k \right) \leqslant\sqrt{\frac{\epsilon_{1}^{2}+\epsilon_{2}^{2}}{\epsilon _{2}^{2}}}d_{im}\left( k-1 \right) \left( \tau \left( k \right) +\delta \right)$. \bm{{Subcase~A}}$\text{-}$\bm{{2:~}} A normal edge $(i,m)$ is mistakenly believed to be attacked from  communication layer. In this way, it gives  $D_{KL}(y_{1im}^{*}(k) \Vert  y^{*}_{2im}(k))>\theta$. However, for  the attack-free system, based on $ (\ref{GOeq22op})$, we have 
	\begin{equation}\label{GOeq22op}
		\begin{aligned}
			&y_{1im}^{*}(k)=M_1(k)\left( \overline{y}_{1im}(k)-F_1\left( k \right) \right)=\widetilde{y}_{1im}(k),\\
			&y_{2im}^{*}(k)=M_1(k)\left( \overline{y}_{2im}(k)-F_2\left( k \right) \right)=\widetilde{y}_{1im}(k).
		\end{aligned}
	\end{equation}		
	This  indicates that the two state values  in message set transmitted  after the reverse process have  the same distribution.	From  Definition $\ref{GOde1}$, it is obvious that $D_{KL}(y_{1im}^{*}(k) \Vert  y_{2im}^{*}(k))\leqslant \theta$, a  contradiction.
	
	\bm{{Case~B:}} Four subcases will be addressed. First, two single attack scenarios are under consideration.  \bm{{Subcase~B}}$\text{-}$\bm{{1:~}}  Byzantine agent $\varepsilon$ is viewed
	as a  normal one. In this way, we get $d_{i\varepsilon }\left( k \right) \leqslant\sqrt{\frac{\epsilon_{1}^{2}+\epsilon_{2}^{2}}{\epsilon _{2}^{2}}}d_{i\varepsilon}\left( k-1 \right) \left( \tau \left( k \right) +\delta \right)$ which contradicts  Proposition $\ref{GOpro2}$.  \bm{{Subcase~B}}$\text{-}$\bm{{2:~}}  Attacked edge $(\varepsilon,i)$ is regarded as a  normal one. Thus we have $D_{KL}(y_{1ij}^{*}(k) \Vert  y_{2ij}^{*}(k) \leqslant\theta$. However,  in terms of   Theorem $\ref{GOth242}$, the above conclusion becomes invalid with appropriate $\theta$ and  watermarking parameters including  $\sigma _{M_{1}}^{2}$, $\sigma _{F_{2}}^{2}$, $
	\sigma _{M_{2}}^{2}$ and  $\sigma _{F_{1}}^{2}$.
	Next, we consider a  more complex situation that both agent $\omega$ and edge $(\omega,i)$ are suffer from  attacks, which  corresponds  to the hybrid attacks.  
	\bm{{Subcase~B}}$\text{-}$\bm{{3:~}} Attacked edge  $(\omega,i)$ is judged as a  normal one. Similar  to the analysis process, it is derived that the above statement fails. \bm{{Subcase~B}}$\text{-}$\bm{{4:~}} Byzantine agent $\omega$ bypasses the detection strategy in \emph{Algorithm}  $ \ref{alg3} $. If there exist at least  $L+P+1$   two-hop paths  between any two agents, under \emph{Assumption} $\ref{GOas2}$, we can always find   one  trusted  two-hop path  between any two agents  such that edge $(\omega,s)$, agent $s$ and edge $(s,i)$ are safe. Based on the  analysis in \bm{{Subcase~A}}$\text{-}$\bm{{1}}, \bm{{Subcase~A}}$\text{-}$\bm{{2}} and \bm{{Subcase~B}}$\text{-}$\bm{{2}}, we have $\left\{ \varphi_{is 1}\left( k \right) =0,\varphi_{is 2}\left( k \right) =0 \right\}$  and $\left\{ \varphi_{i\omega 1}\left( k \right) =1,\varphi_{i\omega 2}\left( k \right) =2 \right\}$. By resorting to \emph{Algorithm}   $\ref{alg3}$, if agent $\omega$ is regarded as a safe one, it has  $\left\{ \varphi_{s\omega 1}\left( k \right) =0,\varphi_{s\omega 2}\left( k \right) =0 \right\}$.  While, this  violates the conclusion  in  \bm{{Subcase~B}}$\text{-}$\bm{{1}} which should be $\left\{ \varphi_{s\omega 1}\left( k \right) =0,\varphi_{s\omega 2}\left( k \right) =1 \right\}$. The proof is thus  completed.      
	\end {proof}

 {\begin{rmk}
          At present, the detection mechanism based on two-hop communication is mainly targeted at the agent layer. Since this paper studies the hybrid  attacks (including agent  layer and communication layer), we degrade the attacks to agent  layer attacks and compare the detection mechanism. That is,   by fixing  $P=0$, we can compare our results with  the case of  $L$-local \cite{yuan2021secure}. Our algorithm brings some moderate graph 
          requirement and less information transmission. As for the  classical literature  for  two-hop detector \cite{yuan2021secure},   the required  condition about graph   is that:  for   each edge $(i,j)$ in $\mathcal{G}$ and  each agent $h\in N^{+}_{j}$, we have  $h\in N^{+}_{i}$ or there are at least  $2L+1$ directed two-hop edges from  each  agent $h$ to agent $i$.
          In this paper, as shown in Theorem $\ref{GOth4}$, we can see that under the model parameter requirements of  system $(\ref{GOeq1})$, our graph connectivity condition requires only  $L+1$ directed two-hop paths  between any pair of neighboring  agents  with the same direction of the corresponding edge.    
          Moreover, as described in  \cite{yuan2021secure}, it is demonstrated  that within this two-hop communication  detection framework, the transmitted information from each agent $i\in \mathbb{V}$ encompasses its own state value, its own identity  information, the identity and state value of neighboring agents  in the two-hop process, as well as the final result of local detection for malicious agents.	
          However, according to \textit{Algorithm} $\ref{alg3}$, it is  found that the entire network only transmits  the following information: the data set encrypted by the watermarking $\left\{ \overline{y}_{1ij}\left( k \right) ,\overline{y}_{2ij}\left( k \right) \right\}$, and two flag values $\left\{ \varphi_{ij1}\left( k \right),\varphi_{ij2}\left( k \right)\right\}$. Besides, it should be noted  that only  two flag values which belong to  real numbers indicate a two-hop communication. This means  that our findings provide relatively limited amounts of informational content.
	\end{rmk}}

	\section{Platooning  Simulations  }\label{GOsec4}
	
	In this part, we  evaluate the effectiveness of the proposed method  for platooning of connected vehicles  numbered  $0$-$6$ \cite{bian2019reducing}, and   agent $0$ is the  leader. Each vehicle  is regarded as an agent that  communicates with its neighbors via network.  {Fig.
		$\ref{GOfig1224}$ is the information flow topology  for platoons.} 
	Fig.
	$\ref{GOfig2}$ is the communication network topology  under attacks.
	
	 {	\begin{figure*}[htbp]
			\centering
			\includegraphics[width=6in,height=1.4in]{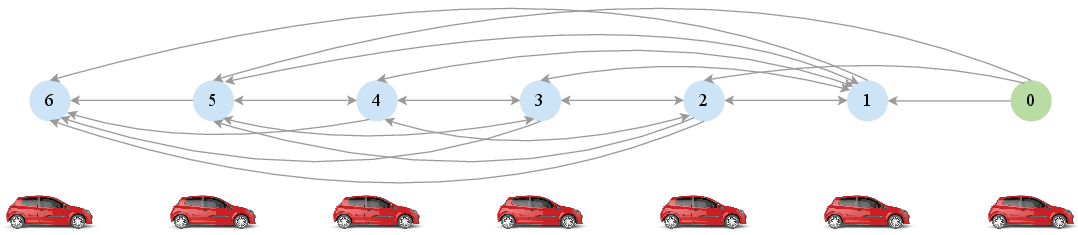}
			\caption{ {Information flow topology for platoons.}}
			\label{GOfig1224}
	\end{figure*} }

	\begin{figure}[H]
		\centering
		\includegraphics[width=2.3in,height=2.5in]{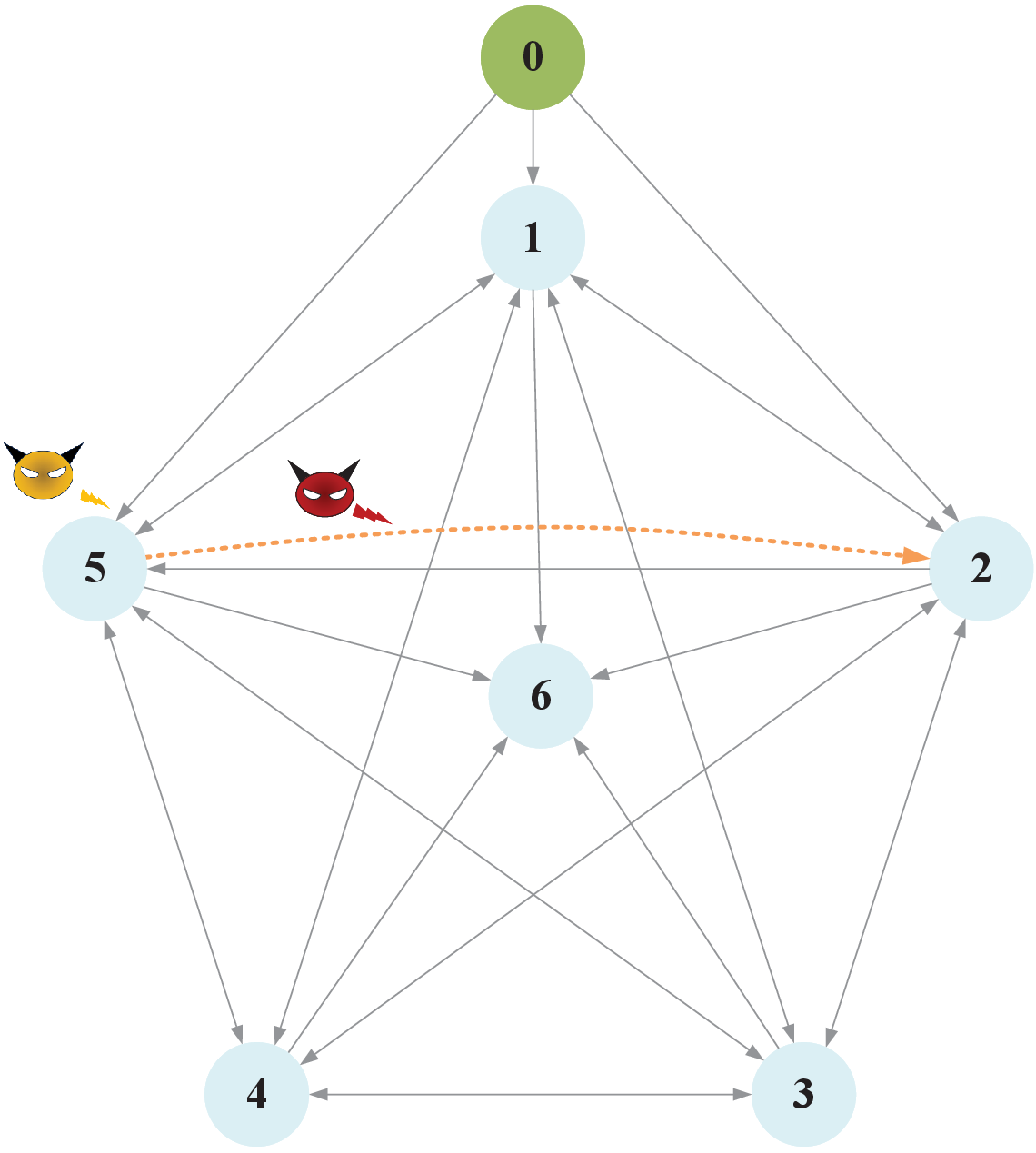}
		\caption{Network topology for platooning of connected vehicles.}
		\label{GOfig2}
	\end{figure}

	The dynamics of  vehicle model $i$ have the form  
	$$
	\begin{cases}
		\dot{p}_i\left( t \right) =v_i\left( t \right),\\
		\dot{v}_i\left( t \right) =a_i\left( t \right),\\
		\varDelta \dot{a}_i\left( t \right) +a_i\left( t \right) =u_i\left( t \right),\\
	\end{cases}
	$$
	where $p_i\left( k \right) \in \mathbb{R}$, $v_i\left( k \right) \in \mathbb{R}$, $a_i\left( k \right) \in \mathbb{R}$ and $\varDelta \in \mathbb{R}$ are   position, velocity,  acceleration and   inertial time lag in the powertrain, respectively. 
	
	The  discretized version of the above model is  
	
	$$
	\left[ \begin{array}{c}
		p_i\left( k+1 \right)\\
		v_i\left( k+1 \right)\\
		a_i\left( k+1 \right)\\
	\end{array} \right] =\left( \bm{I}+A \right) \left[ \begin{array}{c}
		p_i\left( k \right)\\
		v_i\left( k \right)\\
		a_i\left( k \right)\\
	\end{array} \right] +Bu_i\left( k \right), 
	$$\\
	where 
	$$
	A=\left[ \begin{matrix}
		0&		T&		0\\
		0&		0&		T\\
		0&		0&		-\frac{T}{\varDelta}\\
	\end{matrix} \right], 
	\\
	B=\left[ \begin{matrix}
		0\\		0\\		1\\
	\end{matrix} \right]$$
	and $T  \in \mathbb{R}^{+}$ is the sampling period.
	Here  $\varDelta=1.2$, $T=1$.
	
	For attacks on the communication layer, $100$ Monte Carlo trials are carried out. The parameters of   additive and multiplicative watermarking are set  as $\lambda_1=2$, $\lambda_2=5$ $\sigma _{M_{1}}^{2}=7.2$, $\sigma _{M_{2}}^{2}=4.3$, $\sigma _{F_{1}}^{2}=2$, $\sigma _{F_{2}}^{2}=3.5$ and $\sigma _{1}^{2}=4$ as well as  the detection threshold   $\theta =4.61$.
	
	Fig. $\ref{GOfig13}$ depicts  the transient performance comparison for different  detection schemes at $k=4$,  which corresponds  to the  transient  stage  satisfying  $\eta(k) \geqslant 0.05
	$. We can see that for  the normal system,  the maximum KL    divergence monotonically increases with  ${\max}\left\| d_{ij}\left( 0 \right) \right\|$ in \cite{mustafa2020resilient}. However, the KL  divergence under  \emph{Algorithm}  $\ref{alg1}$ almost  keeps unchanged  which indicates that our detection scheme is more suitable for transient stage.       When the system is attacked on   the communication  layer from $k\geqslant 10$  at which  the system has not reached  steady state, the attack signals   $\overline{y}_{125}\left(k \right)=\varXi _{125}(k) \overline{y}_{125}(k)+\varLambda_{125}(k)$ and $\overline{y}_{225}\left(k \right)=\varXi _{225}(k) \overline{y}_{225}(k)+\varLambda_{225}(k)$
	are  injected  into   edge $(5,2)$ where the forms of $\varXi _{125}(k), \varLambda_{125}(k), \varXi _{225}(k)$ and $\varLambda_{125}(k)$ are listed below. Fig. $\ref{GOfig455}$ plots  the  KL divergence of the system under such an  attack. It can be seen that the alarm of  detector associated with  edge $(5,2)$  is triggered  after  $k=10$.

	$$
	\begin{cases}
		\varXi  _{125}\left( k \right) =\left[ \begin{matrix}
			\sin\mathrm{(}k)&		0&		0\\
			0&		8.3\sin\mathrm{(}k)&		0\\
			0&		0&		2.4\sin\mathrm{(}k)\\
		\end{matrix} \right],\\
		\varLambda _{125}\left( k \right) =\left[ \begin{array}{c}
			0\\
			3.73\sin\mathrm{(}k)\\
			-1.32\sin\mathrm{(}k)\\
		\end{array} \right] .\\
	\end{cases}
	$$
	$$
	\begin{cases}
		\varXi  _{225}\left( k \right) =\left[ \begin{matrix}
			0&		0&		0\\
			0&		7.3\sin\mathrm{(}k)&		0\\
			0&		0&		-2.32\sin\mathrm{(}k)\\
		\end{matrix} \right],\\
		\varLambda _{225}\left( k \right) =0.\\
	\end{cases}
	$$

	\begin{figure}[H]
		\centering
		\includegraphics[width=2.5in,height=2.0in]{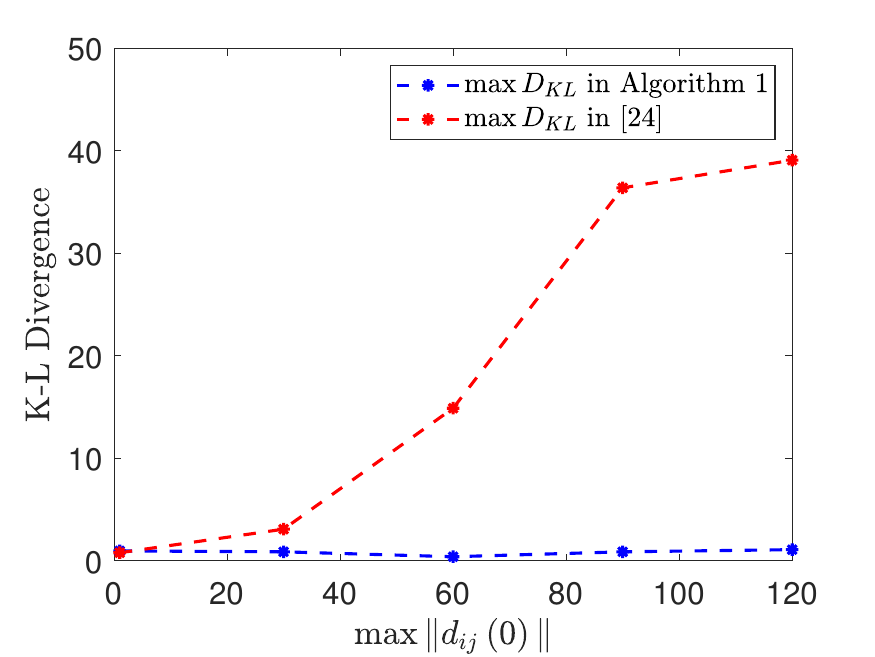}
		\caption{The maximum KL divergence curves after  $k=4$ under different initial relative errors ${\max}\left\| d_{ij}\left( 0 \right) \right\|$ of detection strategy in \cite{mustafa2020resilient} and {Algorithm}  $\ref{alg1}$ in this paper.}
		\label{GOfig13}
	\end{figure}

	\begin{figure}[H]
		\centering
		\includegraphics[width=2.5in,height=2.0in]{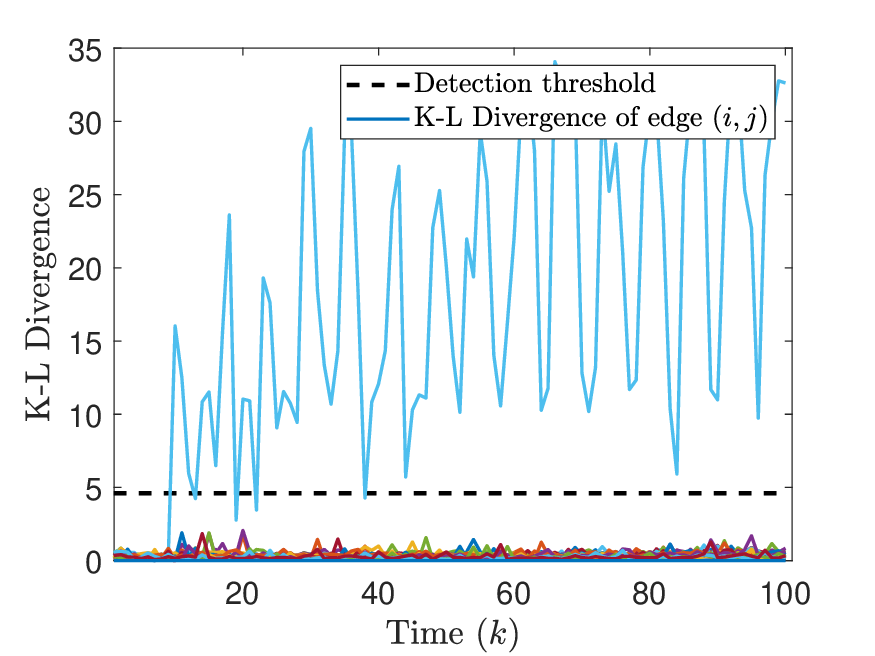}
		\caption{KL divergence of  system suffering from attacks on the  communication  layer under {Algorithm} 1.}
		\label{GOfig455}
	\end{figure}

	For    attacks  acting on  the agent layer, we set $M_r=100$, $\phi=0.16$ and $\delta=6$. From Fig. $\ref{GOfig5}$, when there is no attack, we can see that $\left\| y_{ij}\left( k \right)-x_{i}\left( k \right)\right\|$  is below the envelope, and the detector alarm will not be triggered. When the system is attacked for  $k\geqslant 20$ in which the Byzantine  agent  is  $5$, Fig. $\ref{GOfig6}$ gives the corresponding curves of  $ \left\| y_{ij}\left( k \right)-x_{i}\left( k \right)\right\|$.  It is found  that  the detector alarm is triggered immediately.
	
	\begin{figure}[H]
		\centering
		\includegraphics[width=2.5in,height=2.0in]{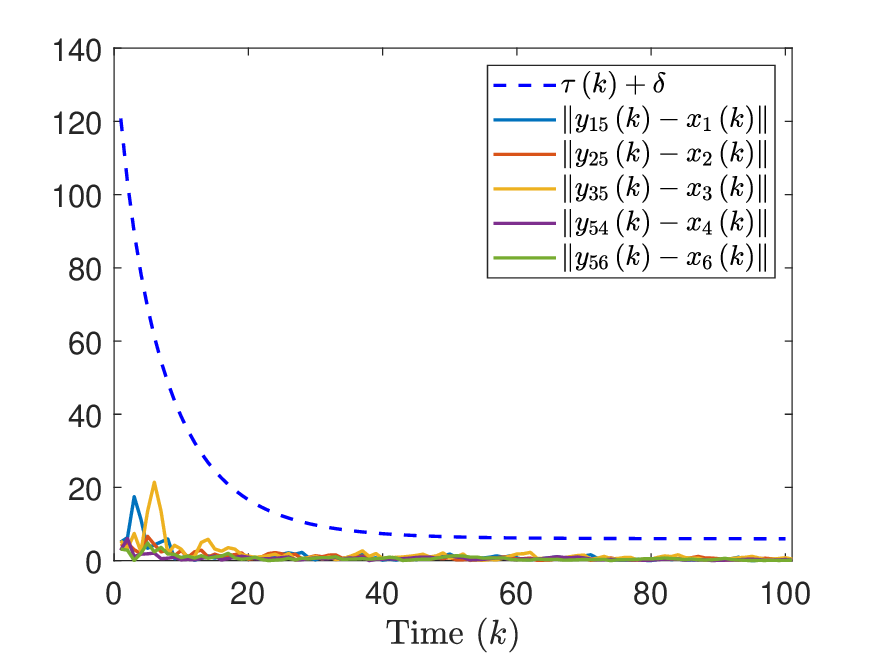}
		\caption{$\left\| y_{ij}\left( k \right)-x_{i}\left( k \right)\right\|$ of normal system under {Algorithm}  $\ref{alg2}$.}
		\label{GOfig5}
	\end{figure}
	
	\begin{figure}[H]
		\centering
		\includegraphics[width=2.5in,height=2.0in]{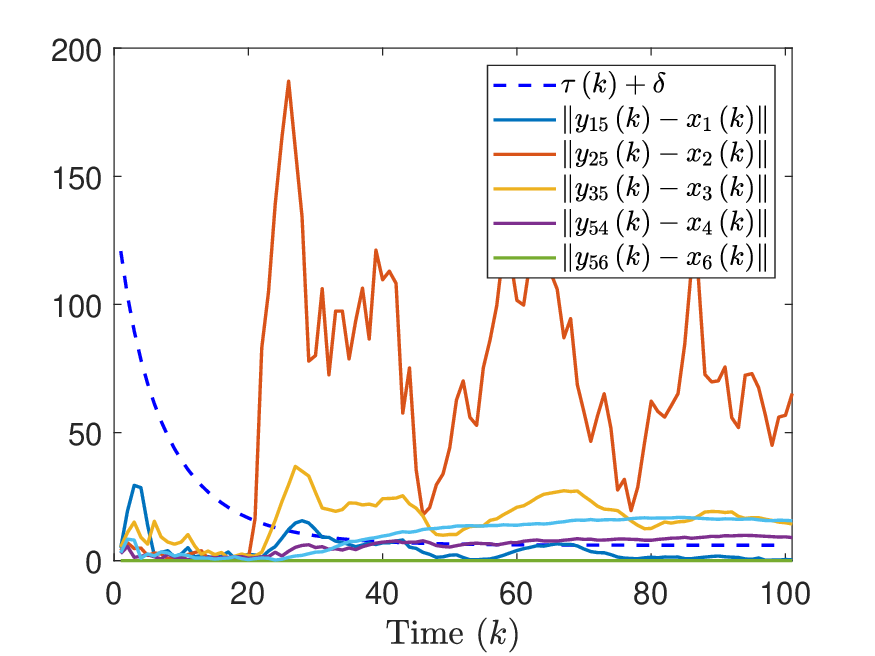}
		\caption{$ \left\| y_{ij}\left( k \right)-x_{i}\left( k \right)\right\|$  suffering  Byzantine attacks  under {Algorithm}  $\ref{alg2}$.}
		\label{GOfig6}
	\end{figure}

	For the  hybrid attacks, we consider the scenario  that attacks are implemented on  agent  $5$ and edge $(5,2)$. In order to make the simulations  clearer, three attack conditions  are addressed, e.g., for $k \in [2,4)$, attacks on the  communication  layer are   implemented on edge $(5,2)$; for $k \in [4,6)$, hybrid attacks are injected into agent  $5$ and edge $(5,2)$;  for  $k \in [6,8)$, only attacks on agent $5$ are  imposed. TABLE $\rm{\ref{GOfi99}}$ lists the   detailed results. It can be observed that,  over all times, agent $2$ judges the flag value of the two hop agent $\hat{j}\in \left\{1,3,4\right\} $ between agent $2$ and agent $5$ as $\left\{ \varphi_{2\hat{j}1}\left( k \right)=0, \varphi_{2\hat{j}2}\left( k \right)=0 \right\}$ which means  that agents  $1$, $3$ and $4$ are trusted agents. More specific,

	\begin{enumerate}
		\item {Attacks on the  communication layer:} For $k \in [2,4)$, the flag  judged by agent $2$ to agent $5$ is $\left\{ \varphi_{251}\left( k \right)=1, \varphi_{252}\left( k \right)=2 \right\}$, indicating   edge $(5,2)$ is attacked. And  the  flag incured  by agent  $\hat{j}$ to agent $5$ becomes  $\left\{ \varphi_{\hat{j}51}\left( k \right)=0, \varphi_{\hat{j}52}\left( k \right)=0 \right\}$,  which implies that  agent $5$ is not affected  by the  attacks on the  agent layer. That is to say,  agent $5$ suffers from attacks  only  on the communication  layer.  
		\item Hybrid attacks: For $k \in [4,6)$, the flag caused  by agent $2$ to agent $5$ turns to be  $\left\{ \varphi_{251}\left( k \right)=1, \varphi_{252}\left( k \right)=2 \right\}$, which means that    edge $(2,5)$ is attacked. Moreover, the flag  judged by agent  $\hat{j}$ to agent $5$ is $\left\{ \varphi_{\hat{j}51}\left( k \right)=0, \varphi_{\hat{j}52}\left( k \right)=1 \right\}$,  indicating that agent $5$ is  attacked. Thus  agent $5$ is under  hybrid attacks. 
		\item Attacks on the agent layer: For $k \in [6,8)$, the flag induced by agent $2$ to agent $5$ has the form  $\left\{ \varphi_{251}\left( k \right)=0, \varphi_{252}\left( k \right)=0 \right\}$, showing  edge $(5,2)$ has not  attacked. In addition, the flag  judged by agent  $\hat{j}$ to agent $5$ is $\left\{ \varphi_{\hat{j}51}\left( k \right)=0, \varphi_{\hat{j}52}\left( k \right)=1 \right\}$  which implies  that agent $5$ is  attacked. We can say that  there are attacks for agent $5$ only on the agent layer.
	\end{enumerate}

	\begin{table}[ht]
		\caption{The flags of {Algorithm}  $\ref{alg3}$ against hybrid attacks}
		\centering
		
		\renewcommand\arraystretch{1.6}{
			\setlength{\tabcolsep}{1mm}{\scalebox{0.9}{
					\begin{tabular}{|c|c|c|c|}
						\hline
						{\diagbox[innerwidth=3.3cm]{Flags}{Time$(k)$}}&$[2,4)$&$[4,6)$&$[6,8)$\\ \hline
						$\left\{\varphi_{201}\left( k \right),\varphi_{202}\left( k \right)\right\}$& $\left\{0,0\right\}$&$\left\{0,0\right\}$&$\left\{0,0\right\}$ \\ \hline 
						$\left\{\varphi_{211}\left( k \right),\varphi_{212}\left( k \right)\right\}$& $\left\{0,0\right\}$&$\left\{0,0\right\}$&$\left\{0,0\right\}$\\ \hline
						$\left\{\varphi_{231}\left( k \right),\varphi_{232}\left( k \right)\right\}$& $\left\{0,0\right\}$&$\left\{0,0\right\}$&$\left\{0,0\right\}$\\ \hline
						$\left\{\varphi_{241}\left( k \right),\varphi_{242}\left( k \right)\right\}$& $\left\{0,0\right\}$&$\left\{0,0\right\}$&$\left\{0,0\right\}$\\ \hline 
						$\left\{\varphi_{251}\left( k \right),\varphi_{252}\left( k \right)\right\}$& $\left\{1,2\right\}$&$\left\{1,2\right\}$&$\left\{0,1\right\}$\\ \hline
						$\left\{\varphi_{151}\left( k \right),\varphi_{152}\left( k \right)\right\}$& $\left\{0,0\right\}$&$\left\{0,1\right\}$&$\left\{0,1\right\}$\\ \hline    
						$\left\{\varphi_{351}\left( k \right),\varphi_{352}\left( k \right)\right\}$& $\left\{0,0\right\}$&$\left\{0,1\right\}$&$\left\{0,1\right\}$\\ \hline
						$\left\{\varphi_{451}\left( k \right),\varphi_{452}\left( k \right)           \right\}$& $\left\{0,0\right\}$&$\left\{0,1\right\}$&$\left\{0,1\right\}$\\ \hline
						$\left\{\varphi_{211}\left( k \right),\varphi_{212}\left( k \right)\right\}$& $\left\{0,0\right\}$&$\left\{0,0\right\}$&$\left\{0,0\right\}$\\ \hline    
						$\left\{\varphi_{231}\left( k \right),\varphi_{232}\left( k \right)\right\}$& $\left\{0,0\right\}$&$\left\{0,0\right\}$&$\left\{0,0\right\}$\\ \hline
						$\left\{\varphi_{241}\left( k \right),\varphi_{242}\left( k \right)           \right\}$& $\left\{0,0\right\}$&$\left\{0,0\right\}$&$\left\{0,0\right\}$\\ \hline
		\end{tabular}}}}
		\label{GOfi99}
	\end{table}

	\section{Conclusion}\label{GOsec5}
	This paper has proposed a  detection framework for single/hybrid attacks on the communication layer and agent layer. It can handle both transient and steady stages. For  attacks on the  communication  layer,  a KL divergence detection algorithm has been given  based on a modified  watermarking, as well as  a sufficient condition  ensuring  the detection capability. For  attacks on the agent layer,  a  detection scheme involving  the convergence rate has been  designed, which can enhance  the resilience of the system against  attacks while  guaranteeing the detection effect. For hybrid attacks (i.e. attacks on both the communication  layer and  agent  layer),   a general detection framework in terms of  trusted agents has been  suggested, which can detect and  locate attacks accurately, while relaxing the requirements of the  graph. In the future, we will focus on the
	resilient  control of systems under different kinds of attacks.

\section*{References}

\def\refname{\vadjust{\vspace*{-2.5em}}}


\begin{thebibliography}{00}\leftskip1pc
\bibitem{mo2015physical}
Y.~Mo, S.~Weerakkody, and B.~Sinopoli, ``Physical authentication of control
systems: Designing watermarked control inputs to detect counterfeit sensor
outputs,'' \emph{IEEE Control Systems Magazine}, vol.~35, no.~1, pp. 93--109,
2015.

\bibitem{maitra2015offensive}
A.~K. Maitra, ``Offensive cyber-weapons: Technical, legal, and  strategic
aspects,'' \emph{Environment Systems and Decisions}, vol.~35, no.~1, pp.
169--182, 2015.

\bibitem{zhang2020false}
T. Y. Zhang and D.~Ye, ``False data injection attacks with complete
stealthiness in cyber--physical systems: A self-generated approach,''
\emph{Automatica}, vol. 120, pp. 109117, 2020.

\bibitem{lu2022false}
A. Y. Lu and G. H. Yang, ``False data injection attacks against state
estimation without knowledge of estimators,'' \emph{IEEE Transactions on
	Automatic Control}, vol.~67, no.~9, pp. 4529--4540, 2022.

\bibitem{zhang2015optimal}
H.~Zhang, P.~Cheng, L.~Shi, and J.~Chen, ``Optimal denial-of-service attack
scheduling with energy constraint,'' \emph{IEEE Transactions on Automatic
	Control}, vol.~60, no.~11, pp. 3023--3028, 2015.


\bibitem{kwon2016cyber}
C.~Kwon and I.~Hwang, ``Cyber attack mitigation for cyber--physical systems:
Hybrid system approach to controller design,'' \emph{IET Control Theory \&
	Applications}, vol.~10, no.~7, pp. 731--741, 2016.

\bibitem{fawzi2014secure}
H.~Fawzi, P.~Tabuada, and S.~Diggavi, ``Secure estimation and control for
cyber-physical systems under adversarial attacks,'' \emph{IEEE Transactions
	on Automatic Control}, vol.~59, no.~6, pp. 1454--1467, 2014.

\bibitem{yuan2016resilient}
Y.~Yuan, H.~Yuan, L.~Guo, H.~Yang, and S.~Sun, ``Resilient control of networked
control system under DoS attacks: A unified game approach,'' \emph{IEEE
	Transactions on Industrial Informatics}, vol.~12, no.~5, pp. 1786--1794,
2016.



\bibitem{pasqualetti2013attack}
F.~Pasqualetti, F.~D{\"o}rfler, and F.~Bullo, ``Attack detection and
identification in cyber-physical systems,'' \emph{IEEE Transactions on
	Automatic Control}, vol.~58, no.~11, pp. 2715--2729, 2013.



\bibitem{lu2022detection}
A. Y. Lu and G. H. Yang, ``Detection and identification of sparse sensor
attacks in cyber physical systems with side information,'' \emph{IEEE
	Transactions on Automatic Control}, vol.~68, no.~9, pp. 5349--5364, 2022.

\bibitem{carvalho2018detection}
L.~K. Carvalho, Y. C. Wu, R.~Kwong, and S.~Lafortune, ``Detection and
mitigation of classes of attacks in supervisory control systems,''
\emph{Automatica}, vol.~97, pp. 121--133, 2018.

\bibitem{zhang2010fault}
X.~Zhang, M.~M. Polycarpou, and T.~Parisini, ``Fault diagnosis of a class of
nonlinear uncertain systems with Lipschitz nonlinearities using adaptive
estimation,'' \emph{Automatica}, vol.~46, no.~2, pp. 290--299, 2010.

\bibitem{gallo2020distributed}
A.~J. Gallo, M.~S. Turan, F.~Boem, T.~Parisini, and G.~Ferrari-Trecate, ``A
distributed cyber-attack detection scheme with application to DC
microgrids,'' \emph{IEEE Transactions on Automatic Control}, vol.~65, no.~9,
pp. 3800--3815, 2020.


\bibitem{barboni2020detection}
A.~Barboni, H.~Rezaee, F.~Boem, and T.~Parisini, ``Detection of covert
cyber-attacks in interconnected systems: A distributed model-based
approach,'' \emph{IEEE Transactions on Automatic Control}, vol.~65, no.~9,
pp. 3728--3741, 2020.



\bibitem{yang2022joint}
C.~Yang, Z.~Chu, L.~Ma, G.~Wang, and W.~Dai, ``Joint watermarking-based replay
attack detection for industrial process operation optimization cyber-physical
systems,'' \emph{IEEE Transactions on Industrial Informatics}, vol.~19,
no.~8, pp. 8910--8922, 2022.


\bibitem{ahmed2022practical}
C.~M. Ahmed, V.~R. Palleti, and V.~K. Mishra, ``A practical physical
watermarking approach to detect replay attacks in a CPS,'' \emph{Journal of Process Control}, vol. 116, pp. 136--146, 2022.




\bibitem{mo2009secure}
Y.~Mo and B.~Sinopoli, ``Secure control against replay attacks,'' in \emph{Proc. 
	 Annual Allerton Conference on Communication, Control, and Computing}, pp. 911--918, 2009.

\bibitem{fang2020optimal}
C.~Fang, Y.~Qi, P.~Cheng, and W.~X. Zheng, ``Optimal periodic watermarking
schedule for replay attack detection in cyber--physical systems,''
\emph{Automatica}, vol. 112, pp. 108698, 2020.

\bibitem{naha2023quickest}
A.~Naha, A.~Teixeira, A.~Ahl{\'e}n, and S.~Dey, ``Quickest detection of
deception attacks on cyber--physical systems with a parsimonious watermarking
policy,'' \emph{Automatica}, vol. 155, pp. 111147, 2023.

\bibitem{ghaderi2020blended}
M.~Ghaderi, K.~Gheitasi, and W.~Lucia, ``A blended active detection strategy
for false data injection attacks in cyber-physical systems,'' \emph{IEEE
	Transactions on Control of Network Systems}, vol.~8, no.~1, pp. 168--176,
2020.


\bibitem{liu2023proactive}
H.~Liu, Y.~Zhang, Y.~Li, and B.~Niu, ``Proactive attack detection scheme based
on watermarking and moving target defense,'' \emph{Automatica}, vol. 155, pp.
111163, 2023.

\bibitem{ferrari2020switching}
R.~M.~G.~Ferrari and A.~Teixeira, ``A switching multiplicative watermarking
scheme for detection of stealthy cyber-attacks,'' \emph{IEEE Transactions on
	Automatic Control}, vol.~66, no.~6, pp. 2558--2573, 2020.


\bibitem{ferrari2017detection}
R.~M.~G.~Ferrari and A.~Teixeira, ``Detection and isolation of routing attacks through sensor
watermarking,'' in \emph{Proc. 
	IEEE American Control Conference}, pp. 5436--5442, 2017.



\bibitem{teixeira2018detection}
A.~Teixeira and R.~M.~G.~Ferrari, ``Detection of sensor data injection attacks
with multiplicative watermarking,'' in  \emph{Proc. IEEE
	European Control Conference}, pp. 338--343, 2018.



\bibitem{ferrari2021detection}
R.~M. ~G.~Ferrari and A.~Teixeira, ``Detection of cyber-attacks: A
multiplicative watermarking scheme,'' in \emph{Safety, Security and Privacy
	for Cyber-Physical Systems}.\hskip 1em plus 0.5em minus 0.4em\relax Springer,
2021, pp. 173--201.

\bibitem{10014016}
M.~Luo, B.~Du, W.~Zhang, T.~Song, K.~Li, H.~Zhu, M.~Birkin, and H.~Wen, ``Fleet
rebalancing for expanding shared e-mobility systems: A multi-agent deep
reinforcement learning approach,'' \emph{IEEE Transactions on Intelligent
	Transportation Systems},  vol.~24,
no.~4, pp. 3868--3881, 2023.

\bibitem{yu2019distributed}
Z.~Yu, Z.~Liu, Y.~Zhang, Y.~Qu, and C. Y. Su, ``Distributed finite-time
fault-tolerant containment control for multiple unmanned aerial vehicles,''
\emph{IEEE Transactions on Neural Networks and Learning Systems}, vol.~31,
no.~6, pp. 2077--2091, 2019.

\bibitem{10018476}
Z.~Fan, W.~Zhang, and W.~Liu, ``Multi-agent deep reinforcement learning based
distributed optimal generation control of DC microgrids,'' \emph{IEEE
	Transactions on Smart Grid},  vol.~14,
no.~5, pp. 3337--3351, 2023.

\bibitem{9983527}
X.~Wang, Y.~Cao, B.~Niu, and Y.~Song, ``A novel bipartite consensus tracking
control for multiagent systems under sensor deception attacks,'' \emph{IEEE
	Transactions on Cybernetics},  vol.~53,
no.~9, pp. 5984--5993, 2022.



\bibitem{he2021secure}
W.~He, W.~Xu, X.~Ge, Q. L. Han, W.~Du, and F.~Qian, ``Secure control of
multiagent systems against malicious attacks: A brief survey,'' \emph{IEEE
	Transactions on Industrial Informatics}, vol.~18, no.~6, pp. 3595--3608,
2021.

\bibitem{zuo2021resilient}
Z.~Zuo, X.~Cao, Y.~Wang, and W.~Zhang, ``Resilient consensus of multiagent
systems against denial-of-service attacks,'' \emph{IEEE Transactions on
	Systems, Man, and Cybernetics: Systems}, vol.~52, no.~4, pp. 2664--2675,
2021.


\bibitem{8840898}
X. M. Li, Q.~Zhou, P.~Li, H.~Li, and R.~Lu, ``Event-triggered consensus control
for multi-agent systems against false data-injection attacks,'' \emph{IEEE
	Transactions on Cybernetics}, vol.~50, no.~5, pp. 1856--1866, 2020.




\bibitem{2022198}
Y.~Mao, H.~Jafarnejadsani, P.~Zhao, E.~Akyol, and N.~Hovakimyan, ``Novel
stealthy attack and defense strategies for networked control systems,''
\emph{IEEE Transactions on Automatic Control}, vol.~65, no.~9, pp.
3847--3862, 2020.


\bibitem{shames2011distributed}
I.~Shames, A.~Teixeira, H.~Sandberg, and K.~H. Johansson, ``Distributed
fault detection for interconnected second-order systems,'' \emph{Automatica},
vol.~47, no.~12, pp. 2757--2764, 2011.

\bibitem{sundaram2010distributed}
S.~Sundaram and C.~N. Hadjicostis, ``Distributed function calculation via
linear iterative strategies in the presence of malicious agents,'' \emph{IEEE
	Transactions on Automatic Control}, vol.~56, no.~7, pp. 1495--1508, 2010.

\bibitem{pasqualetti2011consensus}
F.~Pasqualetti, A.~Bicchi, and F.~Bullo, ``Consensus computation in unreliable
networks: A system theoretic approach,'' \emph{IEEE Transactions on Automatic
	Control}, vol.~57, no.~1, pp. 90--104, 2011.




\bibitem{yuan2021secure}
L.~Yuan and H.~Ishii, ``Secure consensus with distributed detection via two-hop
communication,'' \emph{Automatica}, vol. 131, pp. 109775, 2021.




\bibitem{mustafa2020resilient}
A.~Mustafa, H.~Modares, and R.~Moghadam, ``Resilient synchronization of
distributed multi-agent systems under attacks,'' \emph{Automatica}, vol. 115,
pp. 108869, 2020.



\bibitem{leblanc2013resilient}
H.~J. LeBlanc, H.~Zhang, X.~Koutsoukos, and S.~Sundaram, ``Resilient asymptotic
consensus in robust networks,'' \emph{IEEE Journal on Selected Areas in
	Communications}, vol.~31, no.~4, pp. 766--781, 2013.

\bibitem{ishii2022overview}
H.~Ishii, Y.~Wang, and S.~Feng, ``An overview on multi-agent consensus under
adversarial attacks,'' \emph{Annual Reviews in Control}, vol.~53,  pp. 252--272, 2022.


\bibitem{wu2020federated}
Z.~Wu, Q.~Ling, T.~Chen, and G.~B. Giannakis, ``Federated variance-reduced
stochastic gradient descent with robustness to Byzantine attacks,''
\emph{IEEE Transactions on Signal Processing}, vol.~68, pp. 4583--4596, 2020.

\bibitem{yan2022resilient}
J.~Yan, X.~Li, Y.~Mo, and C.~Wen, ``Resilient multi-dimensional consensus in
adversarial environment,'' \emph{Automatica}, vol. 145, pp. 110530, 2022.



\bibitem{9415167}
D.~Zhao, Y.~Lv, X.~Yu, G.~Wen, and G.~Chen, ``Resilient consensus of higher
order multiagent networks: An attack isolation-based approach,'' \emph{IEEE
	Transactions on Automatic Control}, vol.~67, no.~2, pp. 1001--1007, 2022.


\bibitem{zhang2021physical}
D.~Zhang, G.~Feng, Y.~Shi, and D.~Srinivasan, ``Physical safety and cyber
security analysis of multi-agent systems: A survey of recent advances,''
\emph{IEEE/CAA Journal of Automatica Sinica}, vol.~8, no.~2, pp. 319--333,
2021.

\bibitem{fu2019resilient}
W.~Fu, J.~Qin, Y.~Shi, W.~X. Zheng, and Y.~Kang, ``Resilient consensus of
discrete-time complex cyber-physical networks under deception attacks,''
\emph{IEEE Transactions on Industrial Informatics}, vol.~16, no.~7, pp.
4868--4877, 2019.	


\bibitem{bian2019reducing}
Y.~Bian, Y.~Zheng, W.~Ren, S.~E. Li, J.~Wang, and K.~Li, ``Reducing time
headway for platooning of connected vehicles via V2V communication,''
\emph{Transportation Research Part C: Emerging Technologies}, vol. 102, pp.
87--105, 2019.

\bibitem{lu2023bipartite}
X.~Lu and Y.~Jia, ``Bipartite Byzantine-resilient event-triggered consensus
control of heterogeneous multi-agent systems,'' \emph{International Journal
	of Robust and Nonlinear Control}, vol.~33, no.~1, pp. 282--310, 2023.

\bibitem{9146361}
H.~Yang and D.~Ye, ``Observer-based fixed-time secure tracking consensus for
networked high-order multiagent systems against DoS attacks,'' \emph{IEEE
	Transactions on Cybernetics}, vol.~52, no.~4, pp. 2018--2031, 2022.

\bibitem{wang2015consensus}
Y.~Wang, L.~Cheng, Z. G. Hou, M.~Tan, C.~Zhou, and M.~Wang, ``Consensus seeking
in a network of discrete-time linear agents with communication noises,''
\emph{International Journal of Systems Science}, vol.~46, no.~10, pp.
1874--1888, 2015.

\bibitem{cheng2016convergence}
L.~Cheng, Y.~Wang, W.~Ren, Z. G. Hou, and M.~Tan, ``On convergence rate of
leader-following consensus of linear multi-agent systems with communication
noises,'' \emph{IEEE Transactions on Automatic Control}, vol.~61, no.~11, pp.
3586--3592, 2016.

\bibitem{AN2021109664}
A. Y. Lu and G. H. Yang, ``Byzantine-resilient distributed state estimation: A min-switching approach,''
\emph{Automatica}, vol. 129, pp. 109664, 2021.


\bibitem{gracy2021security}
S.~Gracy, J.~Milo{\v{s}}evi{\'c} and  H.~Sandberg,
``Security index based on perfectly undetectable attacks: Graph-theoretic conditions,'' \emph{Automatica}, vol.~134, pp.
109925, 2021.


\bibitem{franklin2002feedback}
G.~F.~Franklin, J.~D.~Powell, and A.~Emami-Naeini, \emph{Feedback Control of Dynamic Systems}, 8th ed. \relax London, U.K.: Pearson, 2019.






\bibitem{zhou2022watermarking}
J.~Zhou, W.~Yang, W.~Ding, W.~X. Zheng, and Y.~Xu, ``Watermarking-based
protection strategy against stealthy integrity attack on distributed state
estimation,'' \emph{IEEE Transactions on Automatic Control}, vol.~68, no.~1, pp.
628--635, 2023.


\bibitem{7554644}
L.~Cheng, Y.~Wang, Z. G. Hou, and M.~Tan, ``Convergence rate of
leader-following consensus of networks of discrete-time linear agents in
noisy environments,''  in  \emph{Proc.  IEEE
	Chinese Control Conference}, pp. 8102--8107, 2016.





\bibitem{8727926}
D.~Ye and T. Y. Zhang, ``Summation detector for false data-injection attack in
cyber-physical systems,'' \emph{IEEE Transactions on Cybernetics}, vol.~50,
no.~6, pp. 2338--2345, 2020.




\bibitem{75MO}
Y. ~Mo,  and B. Sinopoli, ``False data injection attacks in control systems,'' in \emph{Proc.  Workshop
	Secure Control System},  pp. 1--7, 2010.




\end{thebibliography}
\end{document}